\documentclass[11pt]{article}

\newif\ifdraft
\draftfalse

\usepackage{fullpage}
\usepackage[numbers]{natbib}

\usepackage{algorithm, algorithmic}
\usepackage{xspace}
\usepackage{mathpartir}
\usepackage{graphicx}

\usepackage{amsmath, amssymb, amsthm}
\usepackage{verbatim}
\usepackage{color}
\usepackage{times}
\usepackage{xcolor}

\definecolor{DarkGreen}{rgb}{0.1,0.5,0.1}
\definecolor{DarkRed}{rgb}{0.5,0.1,0.1}
\definecolor{DarkBlue}{rgb}{0.1,0.1,0.5}

\usepackage[caption=false]{subfig}

\usepackage{url}

\usepackage[]{hyperref}
\hypersetup{
    unicode=false,          
    pdftoolbar=true,        
    pdfmenubar=true,        
    pdffitwindow=false,      
    pdftitle={},    
    pdfauthor={}
    pdfsubject={},   
    pdfnewwindow=true,      
    pdfkeywords={keywords}, 
    colorlinks=true,       
    linkcolor=DarkRed,          
    citecolor=DarkGreen,        
    filecolor=DarkRed,      
    urlcolor=DarkBlue,          
}

\usepackage{cleveref}
\usepackage{thmtools, thm-restate}

\newcommand{\sw}[1]{\ifdraft \textcolor{blue}{[SW: #1]}\fi}
\newcommand{\ar}[1]{\ifdraft \textcolor{brown}{[AR: #1]}\fi}
\newcommand{\jh}[1]{\ifdraft \textcolor{red}{[JH: #1]}\fi}
\newcommand{\mg}[1]{\ifdraft \textcolor{purple}{[MG: #1]}\fi}
\newcommand{\eg}[1]{\ifdraft \textcolor{green}{[EG: #1]}\fi}


\newcommand\R{\mathbb{R}}
\newcommand\cA{\mathcal{A}}

\newcommand\cQ{\mathcal{Q}}

\newcommand\cX{\mathcal{X}}


\renewcommand{\tilde}{\widetilde}

\DeclareMathOperator*{\Expectation}{\mathbb{E}}
\newcommand{\Ex}[2]{\Expectation_{#1}\left[#2\right]}

\DeclareMathOperator*{\argmin}{\mathrm{argmin}}
\DeclareMathOperator*{\argmax}{\mathrm{argmax}}

\newcommand{\univ}{\cX}

\newcommand{\eps}{\varepsilon}

\renewcommand{\hat}{\widehat}
\renewcommand{\bar}{\overline}

\newcommand{\INDSTATE}[1][1]{\STATE\hspace{#1\algorithmicindent}}
\newcommand{\dq}{\mbox{{\sf DualQuery}}\xspace}
\newcommand{\cplex}{\mbox{{\sf CPLEX}}\xspace}
\newcommand{\MWEM}{\mbox{{\sf MWEM}}\xspace}

\newtheorem{theorem}{Theorem}[section]
\newtheorem{lemma}[theorem]{Lemma}

\newtheorem{remark}[theorem]{Remark}

\newtheorem{definition}[theorem]{Definition}

\widowpenalty=10000
\clubpenalty=10000
\frenchspacing

\theoremstyle{definition}

\title{Dual Query: Practical Private Query Release \\ for High
  Dimensional Data}

\author{
  Marco Gaboardi \qquad
  Emilio Jes\'us Gallego Arias \\
  Justin Hsu \qquad
  Aaron Roth \qquad
  Zhiwei Steven Wu
}

\begin{document}

\maketitle

\begin{abstract}
  We present a practical, differentially private algorithm for
  answering a large number of queries on high dimensional
  datasets. Like all algorithms for this task, ours necessarily has
  worst-case complexity exponential in the dimension of the
  data. However, our algorithm packages the computationally hard step
  into a concisely defined integer program, which can be solved
  non-privately using standard solvers. We prove accuracy and privacy
  theorems for our algorithm, and then demonstrate experimentally that
  our algorithm performs well in practice. For example, our algorithm
  can efficiently and accurately answer millions of queries on the
  Netflix dataset, which has over $17{,}000$ attributes; this is an
  improvement on the state of the art by multiple orders of magnitude.
\end{abstract}

\section{Introduction}
Privacy is becoming a paramount concern for machine learning and data analysis
tasks, which often operate on personal data. For just one example of
the tension between machine learning and data privacy, Netflix released an
anonymized dataset of user movie ratings for teams competing to develop an
improved recommendation mechanism. The competition was a great success (the
winning team improved on the existing recommendation system by more than
$10\%$), but the ad hoc anonymization was not as successful: \citet{NV08} were
later able to re-identify individuals in the dataset, leading to a
lawsuit and the cancellation of subsequent competitions.

{\em Differentially private query release} is an attempt to solve this problem.
Differential privacy is a strong formal privacy guarantee (that, among other
things, provably prevents re-identification attacks), and the problem of {\em
  query release} is to release accurate answers to a set of statistical queries.
As observed early on by \citet{BDMN05}, performing private query release is
sufficient to simulate any learning algorithm in the {\em statistical query
  model} of \citet{Kearns98}.

Since then, the query release problem has been extensively studied in the
differential privacy literature. While simple perturbation can
privately answer a small number of queries \citep{DMNS06}, more sophisticated
approaches can accurately answer nearly exponentially many queries in the size
of the private database \citep{BLR08,DNRRV09,DRV10,RR10,HR10,GRU12,HLM12}. A
natural approach, employed by many of these algorithms, is to
answer queries by generating \emph{synthetic data}: a safe version of the
dataset that approximates the real dataset on every statistical query of
interest.

Unfortunately, even the most efficient approaches for query release have a
per-query running time linear in the size of the \emph{data universe}---the
number of possible kinds of records. If each record is specified by a set of
attributes, the size of the data universe is exponential in the number of
attributes \citep{HR10}.  Moreover, this running
time is necessary in the worst case \citep{Ullman13,UV11}.

This exponential runtime has hampered practical evaluation of query release
algorithms. One notable exception is
due to \citet{HLM12}, who perform a thorough experimental evaluation of one
such algorithm, which they called \MWEM (Multiplicative Weights Exponential
Mechanism). They find that \MWEM has quite good accuracy in practice and scales
to higher dimensional data than suggested by a theoretical (worst-case)
analysis. Nevertheless, running time remains a problem, and the approach does
not seem to scale to high dimensional data (with more than $30$ or so attributes
for general queries, and more when the queries have more structure\footnote{%
  \citet{HLM12} are able to scale up to $1000$ features on synthetic data when the
  features are partitioned into a number of small buckets, and the queries
  depend on at most one feature per bucket.}).
The critical bottleneck is the size of the state maintained by the algorithm:
\MWEM, like many query release algorithms, needs to manipulate an object that has
size linear in the size of the data universe. This
quickly becomes impractical for records with even a modest number of attributes.

We present \dq, an alternative algorithm which is {\em dual} to \MWEM in a sense
that we will make precise.  Rather than manipulating an object of exponential
size, \dq solves a concisely represented (but NP-hard) optimization
problem.  Critically, the optimization step does not require a solution that is
private or exact, so it can be handled by existing, highly optimized
solvers. Except for this step, all parts of
our algorithm are efficient.  As a result, \dq requires (worst-case)
space and (in practice) time only linear in the number of {\em queries} of
interest, which is often significantly smaller than the size of the data
universe.  Like existing algorithms for query release, \dq has a provable
accuracy guarantee and satisfies the strong differential privacy guarantee. Both
\dq and \MWEM generate \emph{synthetic data}: the output of both algorithms is a
data set from the same domain as the input data set, and can thus be used as the
original data set would have been used. It is important to note that the
that the output data set is guaranteed to be
similar to the input data set only with respect to the query class; the
synthetic data might not be similar to the
real data in other respects.

We evaluate \dq on a variety of datasets by releasing {\em 3-way marginals}
(also known as {\em conjunctions} or {\em contingency tables}), demonstrating
that it solves the query release problem accurately and efficiently even when
the data includes hundreds of thousands of features. We know of no other
algorithm that can perform accurate, private query release for this class of
queries on real data with more than even $100$ features.

\subsection*{Related work}
Differentially private learning has been studied since \citet{BDMN05} showed how
to convert learning algorithms in the \emph{SQ model} of \citet{Kearns98} into
differentially private learning algorithms with similar accuracy guarantees.
Since then, private machine learning has become a very active field with both
foundational sample complexity results \citep{KLNRS08,CH11,BNS13,DJW13} and
numerous efficient algorithms for particular learning problems
\citep{CM08,CMS11,RBHT09,KST12,CSS12,TS13}.

In parallel, there has been a significant amount of work on privately releasing
synthetic data based on a true dataset while preserving the answers to large
numbers of statistical queries \citep{BLR08,DNRRV09,RR10,DRV10,HR10,GRU12}.
These results are extremely strong in an information theoretic sense: they
ensure the consistency of the synthetic data with respect to an exponentially
large family of statistics. But, all of these algorithms (including the notable
multiplicative weights algorithm of \citet{HR10}, which achieves the
theoretically optimal accuracy and runtime) have running time exponential in the
dimension of the data.  Under standard cryptographic assumptions, this is
necessary in the worst case for mechanisms that answer arbitrary statistical
queries \citep{Ullman13}.

Nevertheless, there have been some experimental evaluations of these approaches
on real datasets. Most related to our work is the evaluation of the \MWEM
mechanism by \citet{HLM12}, which is based on the private multiplicative weights
mechanism \citep{HR10}. This algorithm is inefficient---it manipulates a
probability distribution over a set exponentially large in the dimension of the
data space---but with some heuristic optimizations, \citet{HLM12} were able to
implement the multiplicative weights algorithm on real datasets with up
to 77 attributes (and even more when the queries are restricted to take positive
values only on a small number of disjoint groups of features). However, it seems
difficult to scale this approach to higher dimensional data.

Another family of query release algorithms are based on the Matrix Mechanism
\citep{CHRMM10,LM12}. The runtime guarantees of the matrix mechanism are
similar to the approaches based on multiplicative weights---the algorithm
manipulates a ``matrix'' of queries with dimension exponential in the number of
features.  \citet{CPSY12} evaluate an approach based on this family of
algorithms on low dimensional datasets, but scaling to high dimensional data
also seems challenging. A recent work by \citet{PrivBayes} proposes
a low-dimensional approximation for high-dimensional data distribution by
privately constructing Bayesian networks, and shows that such a
representation gives good accuracy on some real datasets.

Our algorithm is inspired by the view of the synthetic data generation problem
as a zero-sum game, first proposed by \citet{HRU13}. In this interpretation,
\citet{HLM12} solves the game by having a {\em data player} use a no-regret
learning algorithm, while the {\em query player} repeatedly best responds by
optimizing over queries. In contrast, our algorithm swaps the roles of the two
players: the query player now uses the no-regret learning algorithm, whereas the
data player now finds best responses by solving an optimization problem. This is
reminiscent of ``Boosting for queries,'' proposed by \citet{DRV10}; the
main difference is that our optimization problem is over single records
rather than sets of records. As a result, our optimization can be
handled non-privately.

There is also another theoretical approach to query release due to Nikolov,
Talwar, and Zhang (and later specialized to marginals by Dwork, Nikolov, and
Talwar) that shares a crucial property of the one we present here---namely that
the computationally difficult step does not need to be solved privately
\citep{NTZ13,DNT14}. The benefit of our approach is that it yields synthetic data
rather than just query answers, and that the number of calls to the optimization
oracle is smaller. However, the approach of \citet{NTZ13,DNT14} yields
theoretically optimal accuracy bounds (ours does not), and so that approach
certainly merits future empirical evaluation.

\section{Differential privacy background}
Differential privacy has become a standard algorithmic notion
for protecting the privacy of individual records in a statistical
database. It formalizes the requirement that the addition or removal
of a data record does not change the probability of any outcome of the
mechanism by much.

To begin, databases are multisets of elements from an abstract domain
$\cX$, representing the set of all possible data records.  Two databases $D,
D'\subset \cX$ are {\em neighboring} if they differ in a single data element
($| D \triangle D'| \leq 1$).

\begin{definition}[\citet{DMNS06}] A mechanism
$M\colon \cX^n \rightarrow R$ satisfies $(\eps, \delta)$-differential
privacy if for every $S\subseteq R$ and for all neighboring databases
$D, D'\in \cX^n$, the following holds:
\[
\Pr[M(D)\in S] \leq e^\eps \Pr[M(D') \in S] + \delta
\]
If $\delta =0$ we say $M$ satisfies $\eps$-differential privacy.
\end{definition}

\begin{definition}
The \emph{(global) sensitivity} of a query $q$ is its maximum difference
when evaluated on two neighboring databases:
\[
GS_f = \max_{D, D' \in \cX^n : |D\triangle D'| = 1} |q(D) - q(D')|.
\]
\end{definition}

In this paper, we consider the private release of information for the
classes of linear queries, which have sensitivity $1/n$.

\begin{definition}
  For a predicate $\varphi\colon \cX\rightarrow \{0,1\}$, the \emph{linear
  query} $q_\varphi\colon \cX^n \rightarrow [0,1]$ is defined by
\[
  q_\varphi(D) = \frac{\sum_{x\in D} \varphi(x)}{|D|} .
\]
We will often represent linear queries a different form, as a vector $q \in \{
0, 1\}^{|\cX|}$ explicitly encoding the predicate $\phi$:
\[
  q(D) = \frac{\sum_{x \in D} q_x}{|D|} .
\]
\end{definition}

We will use a fundamental tool for private data analysis: we can bound the
privacy cost of an algorithm as a function of the privacy costs of its
subcomponents.

\begin{lemma}[\citet{DRV10}] \label{composition}
Let $M_1,\ldots , M_k$ be such that each $M_i$ is $(\eps_i,0)$-private with
$\eps_i\leq \eps'$.  Then $M(D) = (M_1(D), \ldots , M_k(D))$ is
$(\eps,0)$-private for
 $ \eps = \sum_{i = 1}^k \eps_i,$
and $(\eps, \delta)$-private for
\[
  \eps = \sqrt{2 \log(1 / \delta) k}\eps' + k\eps'(e^{\eps'} - 1)
\]
for any $\delta \in (0, 1)$. The sequence of mechanisms can be chosen
adaptively, i.e., later mechanisms can take outputs from previous mechanisms as
input.

\end{lemma}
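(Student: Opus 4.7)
The plan is to handle the two bounds separately. For the pure $(\eps,0)$ bound, I would proceed by induction on $k$ (or directly): write the joint output distribution as a product of conditional distributions $p(y_1,\dots,y_k \mid D) = \prod_i p_i(y_i \mid y_{<i}, D)$, and observe that for each fixed prefix $y_{<i}$ the conditional mechanism $M_i(\cdot \mid y_{<i})$ is still $(\eps_i,0)$-differentially private (this is where adaptivity is used). Taking the ratio of the joint densities at $D$ and $D'$ therefore factors into a product of at most $e^{\eps_i}$ terms, giving an overall bound of $e^{\sum_i \eps_i}$; integrating this ratio over any event $S$ yields the first claim.

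For the $(\eps,\delta)$ bound I would follow the standard ``privacy loss as a martingale'' approach. Define the privacy loss random variable at step $i$ as
\[
Z_i(y_1,\dots,y_i) = \log \frac{\Pr[M_i(D) = y_i \mid y_{<i}]}{\Pr[M_i(D') = y_i \mid y_{<i}]},
\]
where $y_1,\dots,y_k$ is drawn from $M(D)$. The key analytic facts to establish are: (i) almost surely $|Z_i| \le \eps'$, which follows from the pure DP guarantee of $M_i$; and (ii) $\Ex{}{Z_i \mid y_{<i}} \le \eps'(e^{\eps'}-1)$, the standard bound on the KL divergence between two $e^{\eps'}$-close distributions, proved by a short calculation using $\log(1+x) \le x$ and the pointwise ratio bound. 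The sequence $Z_i - \Ex{}{Z_i \mid y_{<i}}$ is then a bounded martingale difference sequence, so Azuma--Hoeffding gives
\[
\Pr\!\left[\sum_{i=1}^k Z_i > k\eps'(e^{\eps'}-1) + t\right] \le \exp\!\left(-\frac{t^2}{2k(\eps')^2}\right).
\]
Choosing $t = \sqrt{2 k \log(1/\delta)}\,\eps'$ makes the right-hand side $\delta$.

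Finally, I would use the standard conversion from a tail bound on privacy loss to $(\eps,\delta)$-DP. Let $B$ be the (bad) event that $\sum_i Z_i > \eps$, where $\eps$ is the quantity in the statement. For any measurable set $S$ of joint outcomes,
\[
\Pr[M(D) \in S] \le \Pr[M(D)\in S \wedge \neg B] + \Pr[B]
\le e^{\eps}\Pr[M(D')\in S] + \delta,
\]
since on $\neg B$ the pointwise ratio of densities is at most $e^\eps$, and we just showed $\Pr[B] \le \delta$.

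The main obstacle is the adaptive composition: we need (i) and (ii) to hold conditionally on the entire history $y_{<i}$, which requires carefully interpreting ``$(\eps_i,0)$-private'' as a statement about the conditional mechanism $M_i(\cdot \mid y_{<i})$ for every realization of the past. Once that is set up cleanly, the concentration step is a direct application of Azuma and the conversion step is routine.
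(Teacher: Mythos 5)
The paper does not prove this lemma; it is imported verbatim from \citet{DRV10}, so there is no in-paper proof to compare against. That said, your reconstruction is correct and is essentially the original DRV10 argument: factor the joint density into adaptive conditionals for basic composition; define the per-step privacy loss $Z_i$, bound $|Z_i|\le\eps'$ by pure DP and $\Ex{}{Z_i\mid y_{<i}}\le\eps'(e^{\eps'}-1)$ by the KL-divergence bound for multiplicatively $e^{\eps'}$-close distributions; apply Azuma--Hoeffding to the centered sum; and convert the resulting tail bound on the cumulative privacy loss into an $(\eps,\delta)$ guarantee via the standard ``bad event'' argument. One small point worth making explicit: you invoke Azuma with variance proxy $k(\eps')^2$, which requires the interval (range) form of Azuma--Hoeffding --- since $Z_i\in[-\eps',\eps']$ its centered version lies in an interval of width $2\eps'$, giving $\exp(-2t^2/(4k(\eps')^2))=\exp(-t^2/(2k(\eps')^2))$; the cruder bound $|Z_i-\Ex{}{Z_i}|\le 2\eps'$ with the two-sided form of Azuma would lose a factor of $2$ in the exponent and not match the stated constant. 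Likewise, for step (ii) the cleanest route is $D(P\|Q)\le D(P\|Q)+D(Q\|P)=\sum_y(P(y)-Q(y))\log\frac{P(y)}{Q(y)}\le\eps'(e^{\eps'}-1)\sum_y\min(P(y),Q(y))\le\eps'(e^{\eps'}-1)$, rather than a direct $\log(1+x)\le x$ estimate, but the spirit is the same.
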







\section{The query release game} \label{sec:qr:game}
The analysis of our algorithm relies on the interpretation of query release as a two
player, zero-sum game~\citep{HRU13}. In the present section, we review this idea
and related tools.

\subsection*{Game definition}
Suppose we want to answer a set of queries $\cQ$. For each query $q \in \cQ$, we
can form the {\em negated query} $\bar{q}$, which takes values $\bar{q}(D) = 1 -
q(D)$ for every database $D$. %
Equivalently, for a linear query defined by a
predicate $\varphi$, the negated query is defined by the negation
$\neg \varphi$ of the predicate.
For the remainder, we will assume that $\cQ$ is closed under negation; if not,
we may add negated copies of each query to $\cQ$.

Let there be two players, whom we call the {\em data} player and {\em query}
player. The data player has action set equal to the data universe $\cX$, while
the query player has action set equal to the query class $\cQ$. Given a play $x
\in \cX$ and $q \in \cQ$, we let the payoff be
\begin{equation} \label{eq:payoff}
  A(x, q) := q(D) - q(x),
\end{equation}
where $D$ is the true database.  To define the zero-sum game, the data player
will try to minimize the payoff, while the query player will try to maximize the
payoff.


\subsection*{Equilibrium of the game}

Let $\Delta(\cX)$ and $\Delta(\cQ)$ be the set of probability distributions over
$\cX$ and $\cQ$.  We consider how well each player can do if they randomize over
their actions, i.e., if they play from a probability distribution over their
actions.  By von Neumann's minimax theorem,
\[
  \min_{u \in \Delta(\cX)} \max_{w \in \Delta(\cQ)} A(u, w) =
  \max_{w \in \Delta(\cQ)} \min_{u \in \Delta(\cX)} A(u, w),
\]
for any two player zero-sum game, where
\[
  A(u, w) := \mathbb{E}_{x \sim u, q \sim w} A(x, q)
\]
is the expected payoff.  The common value is called the {\em value of the game},
which we denote by $v_A$.

Intuitively, von Neumann's theorem states that there is no advantage in a
player going first: the minimizing player can always force payoff at most $v_A$,
while the maximizing player can always force payoff at least $v_A$.
This suggests that each player can play an optimal strategy, assuming best play
from the opponent---this is the notion of equilibrium strategies, which we now
define. We will soon interpret these strategies as solutions to the query
release problem.

\begin{definition}
  Let $\alpha > 0$. Let $A$ be the payoffs for a two player, zero-sum game with
  action sets $\cX, \cQ$. Then, a pair of strategies $u^* \in \Delta(\cX)$
  and $w^* \in \Delta(\cQ)$ form an {\em $\alpha$-approximate mixed Nash
    equilibrium} if
  \[
    A(u^*, w) \leq v_A + \alpha
    \qquad \text{and} \qquad
    A(u, w^*) \geq v_A - \alpha
  \]
  for all strategies $u \in \Delta(\cX)$ and $w \in \Delta(\cQ)$.
\end{definition}

If the true database $D$ is normalized to be a distribution $\hat{D}$ in
$\Delta(\cX)$, then $\hat{D}$ always has zero payoff:
\[
  A(\hat{D}, w) = \mathbb{E}_{x \sim \hat{D}, q \sim w} [q(D) - q(x)] = 0.
\]
Hence, the value of the game $v_A$ is at most $0$. Also, for any
data strategy $u$, the payoff of query $q$ is the negated payoff of the negated
query $\bar{q}$:
\begin{align*}
  A(u, \bar{q}) &= \mathbb{E}_{x \sim u} [\bar{q}(D) - \bar{q}(x)]
  = \mathbb{E}_{x \sim u} [q(x) - q(D)],
\end{align*}
which is $A(u, \bar{q})$.  Thus, any query strategy that places equal weight on
$q$ and $\bar{q}$ has expected payoff zero, so $v_A$ is at least $0$. Hence,
$v_A = 0$.

Now, let $(u^*, w^*)$ be  an $\alpha$-approximate equilibrium. Suppose that the
data player plays $u^*$, while the query player always plays query $q$.  By the
equilibrium guarantee, we then have $A(u^*, q) \leq \alpha$, but the expected
payoff on the left is simply $q(D) - q(u^*)$.  Likewise, if the query player
plays the negated query $\bar{q}$, then
\[
  -q(D) + q(u^*) = A(u^*, \bar{q}) \leq \alpha,
\]
so $q(D) - q(u^*) \geq -\alpha$. Hence for every query $q \in \cQ$, we know
$|q(u^*) - q(D)| \leq \alpha$. This is precisely what we need for query release:
we just need to privately calculate an approximate equilibrium.

\subsection*{Solving the game}

To construct the approximate equilibrium, we will use the multiplicative weights update
algorithm (MW).\footnote{%
  The MW algorithm has wide applications; it has been rediscovered in various
  guises several times.  More details can be found in the comprehensive survey
  by \citet{AHK12}.}
This algorithm maintains a distribution over actions (initially uniform) over a
series of steps. At each step, the MW algorithm receives a (possibly adversarial)
loss for each action.  Then, MW reweights the distribution to favor actions with
less loss.
The algorithm is presented in \Cref{alg:MW}.
\begin{algorithm}[h!]
  \begin{algorithmic}
    \STATE{Let $\eta > 0$ be given, let $\cA$ be the action space}
    \STATE{Initialize $\tilde{A}^1$ uniform distribution on $\cA$}
    \STATE{For $t = 1,2,\dots,T$:}
    \INDSTATE[1]{Receive loss vector $\ell^t$}
    \INDSTATE[1]{{\bf For each} $a \in \cA${\bf :}}
    \INDSTATE[2]{Update $A^{t+1}_a = e^{- \eta \ell^t_a} \tilde{A}^{t}_a$ for every
  $a \in \cA$}
  \INDSTATE[1]{Normalize $\tilde{A}^{t+1} = \frac{A^{t + 1}}{\sum_i A^{t + 1}_i}$}
  \end{algorithmic}
  \caption{The Multiplicative Weights Algorithm}
  \label{alg:MW}
\end{algorithm}

For our purposes, the most important application of MW is to solving zero-sum
games. \citet{FS96} showed that if one player maintains a distribution over
actions using MW, while the other player selects a {\em best-response} action
versus the current MW distribution (i.e., an action that maximizes his expected
payoff), the average MW distribution and empirical best-response distributions
will converge to an approximate equilibrium rapidly.

\begin{theorem}[\citet{FS96}]
  \label{thm:mw-eq}
  Let $\alpha > 0$, and let $A(i, j) \in [-1, 1]^{m \times n}$ be the payoff
  matrix for a zero-sum game. Suppose the first player uses multiplicative
  weights over their actions to play distributions $p^1, \dots, p^T$, while
  the second player plays $(\alpha/2)$-approximate best responses $x^1, \dots,
  x^T$, i.e.,
  \[
    A(x^t, p^t) \geq \max_x A(x, p^t) - \alpha/2.
  \]
  Setting $T = 16 \log n /\alpha^2$ and $\eta = \alpha/4$ in the MW algorithm,
  the empirical distributions
  \[
    \frac{1}{T}\sum_{i = 1}^T p^i
    \quad \text{and} \quad
    \frac{1}{T} \sum_{i = 1}^T x^i
  \]
  form an $\alpha$-approximate mixed Nash equilibrium.
\end{theorem}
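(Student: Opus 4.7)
The plan follows the classical Freund--Schapire argument: combine a multiplicative weights regret bound with the approximate best-response hypothesis, and then anchor the resulting two-sided inequality at the game value via minimax duality. A standard potential-function analysis of Algorithm~\ref{alg:MW} tracks the unnormalized weight $W_t := \sum_i A^t_i$, uses $e^{-\eta\ell} \le 1 - \eta\ell + \eta^2$ for $\ell \in [-1,1]$ to upper-bound $W_{T+1}$, and lower-bounds $W_{T+1}$ by $e^{-\eta \sum_t \ell^t_i}$ for any fixed action $i$. Comparing the two sides yields the standard regret inequality
\[
\frac{1}{T}\sum_{t=1}^T \langle p^t, \ell^t\rangle \;\le\; \min_{i} \frac{1}{T}\sum_{t=1}^T \ell^t_i + \frac{\log n}{\eta T} + \eta .
\]
Applied with the losses $\ell^t_i = A(x^t, i)$ and parameters $\eta = \alpha/4$, $T = 16 \log n / \alpha^2$, each of the two error terms equals $\alpha/4$, giving
\[
\frac{1}{T}\sum_{t=1}^T A(x^t, p^t) \;\le\; \min_{q} A(\bar x, q) + \frac{\alpha}{2},
\]
where $\bar x := \frac{1}{T}\sum_t x^t$ and the $\min$ ranges over all distributions over the MW player's actions.

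On the other side, averaging the approximate best-response hypothesis $A(x^t, p^t) \ge \max_x A(x, p^t) - \alpha/2$ over $t$ and pulling the $\max$ outside the time average gives
\[
\frac{1}{T}\sum_{t=1}^T A(x^t, p^t) \;\ge\; \max_x A(x, \bar p) - \frac{\alpha}{2},
\]
with $\bar p := \frac{1}{T}\sum_t p^t$. Chaining this with the previous display and invoking von~Neumann's theorem, which furnishes $\min_q A(\bar x, q) \le v_A \le \max_x A(x, \bar p)$, confines both extremes to within $\alpha$ of $v_A$. Hence $A(x, \bar p) \le v_A + \alpha$ for every $x$ and $A(\bar x, q) \ge v_A - \alpha$ for every $q$, which is exactly the definition of an $\alpha$-approximate mixed Nash equilibrium.

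The only substantive ingredient is the MW regret inequality, which is classical; everything else reduces to exchanging $\max$ with the time average and chaining with minimax duality. The main subtlety is checking that the specific constants $\eta = \alpha/4$ and $T = 16\log n / \alpha^2$ really balance the two regret error terms at $\alpha/4$ each, so the MW side contributes at most $\alpha/2$ and the best-response hypothesis contributes another $\alpha/2$, for a total of $\alpha$.
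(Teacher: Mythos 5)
The paper states this result as a citation to \citet{FS96} and does not give a proof of its own, so there is no internal argument to compare against. Your reconstruction is the classical Freund--Schapire argument and is correct: the potential-function regret bound $\tfrac{1}{T}\sum_t\langle p^t,\ell^t\rangle \le \min_i\tfrac{1}{T}\sum_t\ell^t_i + \tfrac{\log n}{\eta T} + \eta$ is right for losses in $[-1,1]$ (note you also need $\eta\le 1$ for the inequality $e^{-z}\le 1-z+z^2$ to apply with $z=\eta\ell$, which holds here since $\eta=\alpha/4$ and $\alpha\le 1$); plugging in $\eta=\alpha/4$ and $T=16\log n/\alpha^2$ indeed balances both error terms at $\alpha/4$; and the chaining through $\min_q A(\bar x,q)\le v_A\le\max_x A(x,\bar p)$ via minimax correctly pins both $\max_x A(x,\bar p)$ and $\min_q A(\bar x,q)$ to within $\alpha$ of $v_A$, which is exactly the $\alpha$-approximate equilibrium condition as defined in the paper.
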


\section{Dual query release}
Viewing query release as a zero-sum game, we can interpret the algorithm of
\citet{HR10} (and the \MWEM algorithm of \citet{HLM12}) as solving the game by
using MW for the data player, while the query player plays best responses. To
guarantee privacy, their algorithm selects the query best-responses privately
via the exponential mechanism of \citet{MT07}. Our algorithm simply reverses
the roles: while \MWEM uses a no-regret algorithm to maintain the data player's
distribution, we will instead use a no-regret algorithm for the query player's
distribution; instead of finding a maximum payoff query at each round, our
algorithm selects a minimum payoff record at each turn.

In a bit more detail, we maintain a distribution $\cQ$ over the queries,
initially uniform. At each step $t$, we first simulate the data player making a
best response against $\cQ$, i.e., selecting a record $x^t$ that maximizes the
expected payoff if the query player draws a query according to $\cQ$. As we
discuss below, for privacy considerations we cannot work directly with the
distribution $\cQ$. Instead, we sample $s$ queries $\{ q_i \}$ from $\cQ$, and
form ``average'' query $\tilde{q}$, which we take as an estimate of the true
distribution $\cQ$. The data player then best-responds against $\tilde{q}$.

Then, we simulate the query player updating the distribution $\cQ$ after seeing
the selected record $x^t$---we will reduce weight on queries that have similar
answers on $x^t$ the database $D$, and we will increase weight on queries that
have very different answers on $x^t$ and $D$. After running for $T$ rounds, we
output the set of records $\{ x_t \}_t$ as the synthetic database. Note that this is a data set from the same domain as the original data set, and so can be used in any application that the original data set can be used in -- with the caveat of course, that except with respect to the query class used in the algorithm, there are no guarantees about how the synthetic data resembles the real data.  The full
algorithm can be found in \Cref{alg:dualquery}; the main parameters are $\alpha
\in (0, 1)$, which is the target maximum additive error of \emph{any} query on
the final output, and $\beta \in (0, 1)$, which is the probability that the
algorithm may fail to achieve the accuracy level $\alpha$. These parameters two
control the number of steps we take ($T$), the update rate of the query
distribution ($\eta$), and the number of queries we sample at each step ($s$).

Our privacy argument differs slightly from the analysis for \MWEM. There, the
data distribution is public, and finding a query with high error requires access
to the private data.  Our algorithm instead maintains a distribution $Q$ over
queries which depends directly on the private data, so we cannot use $Q$
directly. Instead, we argue that \emph{queries sampled from $Q$} are privacy
preserving. Then, we can use a non-private optimization method to find a
minimal error record versus queries sampled from $Q$.  We then trade off
privacy (which degrades as we take more samples) with accuracy (which improves
as we take more samples, since the distribution of sampled queries converges to
$Q$).

Given known hardness results for the query release problem \citep{Ullman13}, our
algorithm must have worst-case runtime polynomial in the universe size $|\cX|$,
so is not theoretically more efficient than prior approaches.  In fact, even
compared to prior work on query release (e.g., \citet{HR10}), our algorithm has
a weaker accuracy guarantee.  However, our approach has an important practical
benefit: the computationally hard step can be handled with standard, non-private
solvers (we note that this is common also to the approach of \cite{NTZ13,DNT14}).

The iterative structure of our algorithm, combined with our use of constraint
solvers, also allows for several heuristics improvements. For instance, we may run
for fewer iterations than predicted by theory. Or, if the optimization problem
turns out to be hard (even in practice), we can stop the solver early at a
suboptimal (but often still good) solution.  These heuristic tweaks can improve
accuracy beyond what is guaranteed by our accuracy theorem, while always
maintaining a strong \emph{provable} privacy guarantee.

\begin{algorithm}
  \begin{algorithmic}
    \STATE{\textbf{Parameters:} Target accuracy level $\alpha \in (0, 1)$,
      target failure probability $\beta \in (0, 1)$.}
    \STATE{\textbf{Input:}
      Database $D \in \R^{|\univ|}$ (normalized) and linear queries
      $q_1, \dots, q_k \in \{0, 1\}^{|\univ|}$.}
    \STATE{\textbf{Initialize:}
      Let $\cQ = \bigcup_{j = 1}^{k} q_j \cup \bar{q_j}$,
      $Q^1$ be a uniform distribution on $\cQ$,
      \[
        T  = \frac{16 \log |\cQ|}{\alpha^2},
        \qquad
        \eta = \frac{\alpha}{4},
        \qquad \text{and} \qquad
        s = \frac{48 \log \left({2 |\univ| T}/{\beta} \right)}{\alpha^2} .
      \]
      Let the payoff function $A_D : \cX \times [0, 1]^{|\cX|} \to \R$ be:
      \[
        A_D(x, \tilde{q}) = \tilde{q}(D) - \tilde{q}(x) ,
        \qquad
        \text{where}
        \qquad
        \tilde{q}(D) = \frac{1}{|D|} \sum_{x \in D} \tilde{q}_x .
      \]
    }
    \STATE{For $t = 1,\dots,T$:}
    \INDSTATE[1]{Sample $s$ queries $\{q_i\}$ from $\cQ$ according to
    $Q^{t}$.}
    \INDSTATE[1]{Let $\tilde{q} := \frac{1}{s} \sum_i q_i$.}
    \INDSTATE[1]{Find $x^t$ with $A_D(x^t, \tilde{q}) \geq \max_x
    A_D(x, \tilde{q}) - \alpha/4$.}
    \INDSTATE[1]{\textbf{Update:} For each $q \in \cQ$:}
    \INDSTATE[2]{$Q^{t+1}_q := \exp(-\eta A_D(x^t, q) \rangle)
      \cdot Q^{t}_q.$}
    \INDSTATE[1]{Normalize $Q^{t+1}$.}

    \STATE{Output synthetic database $\hat{D} := \bigcup_{t = 1}^T x^t.$}
 \end{algorithmic}
 \caption{\dq}
 \label{alg:dualquery}
\end{algorithm}

\subsection*{Privacy}
The privacy proofs are largely routine, based on the composition theorems.
Rather than fixing $\eps$ and solving for the other parameters as is typical in
the literature, we present the
privacy cost $\eps$ as function of parameters $T, s, \eta$. This form will lead
to simpler tuning of the parameters in our experimental evaluation.

We will use the privacy of the following mechanism (result due to \citet{MT07})
as an ingredient in our privacy proof.

\begin{lemma}[\citet{MT07}]
  \label{def:expmech}
  Given some arbitrary output range $R$, the {\em exponential mechanism} with
  score function $S$ selects and outputs an element $r\in R$ with probability
  proportional to
  \[
    \exp\left(\frac{\eps S(D, r)}{2\cdot GS_S}\right),
  \]
  where $GS_S$ is the {\em sensitivity} of $S$, defined as
  \[
    GS_S = \max_{D, D': |D\triangle D'| = 1, r\in R} |S (D, r) - S (D', r)|.
  \]
  The exponential mechanism is $\eps$-differentially private.
\end{lemma}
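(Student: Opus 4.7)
The plan is to verify the definition of $\eps$-differential privacy directly, by bounding the ratio of output densities on any two neighboring databases $D, D'$ with $|D \triangle D'| \leq 1$. Because the exponential mechanism is a ``softmax'' distribution, it suffices to control the ratio pointwise for each $r \in R$ and then integrate over any measurable $S \subseteq R$; this yields the $(\eps, 0)$-guarantee since we bound the ratio by $e^\eps$ at every point.

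First I would write the probability of outputting $r$ on input $D$ explicitly as
\[
  \Pr[M(D) = r] \;=\; \frac{\exp\left( \eps S(D, r) / (2 \cdot GS_S) \right)}{Z(D)},
  \qquad
  Z(D) \;=\; \sum_{r' \in R} \exp\left( \eps S(D, r') / (2 \cdot GS_S) \right).
\]
The ratio $\Pr[M(D) = r] / \Pr[M(D') = r]$ then factors into a numerator factor, $\exp\bigl(\eps (S(D, r) - S(D', r)) / (2 \cdot GS_S)\bigr)$, and a normalization factor, $Z(D')/Z(D)$. The sensitivity bound $|S(D, r) - S(D', r)| \leq GS_S$ immediately gives that the numerator factor is at most $e^{\eps/2}$.

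Next I would bound the normalization ratio by the same sensitivity argument, applied termwise: for every $r' \in R$,
\[
  \exp\bigl( \eps S(D', r') / (2 \cdot GS_S) \bigr) \;\leq\; e^{\eps/2} \cdot \exp\bigl( \eps S(D, r') / (2 \cdot GS_S) \bigr).
\]
Summing over $r'$ gives $Z(D') \leq e^{\eps/2} Z(D)$, so the normalization factor is also at most $e^{\eps/2}$. Multiplying the two factors yields $\Pr[M(D) = r] / \Pr[M(D') = r] \leq e^{\eps}$, which is exactly $\eps$-differential privacy after integrating over any $S \subseteq R$.

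The argument is entirely mechanical and I do not expect any real obstacle. The only point worth flagging is the role of the factor of $2$ in the exponent: it is calibrated precisely to split the privacy loss evenly between the shift in the favored outcome (numerator) and the shift in the normalizer (denominator), each of which contributes $e^{\eps/2}$; without this factor, the same proof would only give a $2\eps$-private mechanism.
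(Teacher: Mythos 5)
The paper does not actually prove this lemma---it is imported verbatim from \citet{MT07} as background---so there is no in-paper proof to compare against. Your argument is the standard, correct proof of that result: bounding the pointwise ratio by decomposing it into the numerator factor $\exp\bigl(\eps(S(D,r)-S(D',r))/(2\cdot GS_S)\bigr) \leq e^{\eps/2}$ and the normalizer factor $Z(D')/Z(D) \leq e^{\eps/2}$ (the latter by applying the sensitivity bound termwise and summing), then multiplying and summing over $r \in S$ to pass from the pointwise bound to $\Pr[M(D)\in S]\leq e^\eps\Pr[M(D')\in S]$. Your closing observation about the factor of $2$ is also accurate; it can be dropped only under an additional monotonicity assumption on $S$ (so that the numerator and normalizer cannot both shift adversely), which the general sensitivity definition here does not provide.
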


We first prove pure $\eps$-differential privacy.

\begin{theorem}
  \label{thm:privacy2}
  \dq is $\eps$-differentially private for
  \[
    \eps = \frac{\eta T(T-1)s}{n}.
  \]
\end{theorem}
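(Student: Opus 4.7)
The plan is to view the algorithm's interaction with the private database $D$ entirely through the lens of the exponential mechanism, and then combine the pieces by composition.

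First I would identify where $D$ enters the computation. The only dependence on $D$ comes through the updates $Q^{t+1}_q \propto \exp(-\eta A_D(x^t,q))\,Q^t_q$, since $A_D(x,q) = q(D)-q(x)$. Unrolling the recurrence from the uniform initial distribution gives
\[
  Q^t_q \;\propto\; \exp\!\Bigl(-\eta \sum_{\tau=1}^{t-1}\bigl(q(D)-q(x^\tau)\bigr)\Bigr).
\]
This is precisely the exponential mechanism of \Cref{def:expmech} with score function $S^t(D,q) = \sum_{\tau<t}\bigl(q(x^\tau)-q(D)\bigr)$. The records $x^\tau$ enter only as constants (they are functions of previously released samples), so the sensitivity of $S^t$ with respect to neighboring databases is
\[
  \bigl|S^t(D,q)-S^t(D',q)\bigr| \;=\; (t-1)\,\bigl|q(D)-q(D')\bigr| \;\leq\; \frac{t-1}{n},
\]
using the sensitivity $1/n$ of a linear query. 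Matching $\eta = \eps_t / (2\, GS_{S^t})$, a single draw from $Q^t$ is $\eps_t$-differentially private with $\eps_t = 2\eta(t-1)/n$.

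Next I would compose within and across rounds. At round $t$ the algorithm draws $s$ independent samples from $Q^t$; the basic (pure-DP) composition bound of \Cref{composition} gives privacy cost $s\eps_t = 2\eta(t-1)s/n$ for this round. Summing over $t=1,\dots,T$ yields total cost
\[
  \sum_{t=1}^T \frac{2\eta(t-1)s}{n} \;=\; \frac{2\eta s}{n}\cdot\frac{T(T-1)}{2} \;=\; \frac{\eta T(T-1)s}{n}.
\]
Since the rounds are adaptive (the distribution $Q^t$ depends on $x^1,\dots,x^{t-1}$), I need the adaptive version of composition, which \Cref{composition} also provides.

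Finally, I would argue that the outputs $x^1,\dots,x^T$---and hence the synthetic database $\hat D = \bigcup_t x^t$---are post-processings of the released samples. The record $x^t$ is chosen to approximately maximize $A_D(x,\tilde q)$, but $\tilde q$ is the average of the $s$ queries sampled at round $t$, and $A_D(x,\tilde q) = \tilde q(D) - \tilde q(x)$ involves $D$. The subtle point---and really the only one requiring care---is to check that the optimizer can be carried out without additional access to $D$: writing $A_D(x,\tilde q) = \tilde q(D) - \tilde q(x)$, the term $\tilde q(D)$ is a constant that does not affect the $\argmax$, so $x^t$ can be computed from $\tilde q$ and the publicly known universe $\cX$ alone. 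Thus $x^t$ is a post-processing of the $s$ sampled queries, and the full transcript is $\eps$-differentially private with $\eps=\eta T(T-1)s/n$ as claimed.
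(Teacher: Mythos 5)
Your proposal is correct and follows essentially the same route as the paper: unroll the update to recognize sampling from $Q^t$ as an instance of the exponential mechanism with a $(t-1)/n$-sensitive score, obtain per-sample cost $2\eta(t-1)/n$, and apply basic adaptive composition over $sT$ samples to get $\eta T(T-1)s/n$. Your additional observation that $x^t$ is a post-processing of the sampled queries (since $\tilde q(D)$ is a constant independent of $x$ and hence does not affect the $\argmax$) is a worthwhile detail that the paper leaves implicit.
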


\begin{proof}
  We will argue that sampling from $Q^t$ is equivalent to running the
  exponential mechanism with some quality score.  At round $t$, let $\{x^i\}$
  for $i \in [t-1]$ be the best responses for the previous rounds. Let $r(D, q)$
  be defined by
  \[
    r(D, q) = \sum_{i = 1}^{t - 1} ( q(D) - q(x^i) ),
  \]
  where $q \in \cQ$ is a query and $D$ is the true database. This function is
  evidently $((t-1)/n)$-sensitive in $D$: changing $D$ changes each $q(D)$ by
  at most $1/n$.  Now, note that sampling from $Q^t$ is simply sampling from the
  exponential mechanism, with quality score $r(D, q)$. Thus, the privacy cost of
  each sample in round $t$ is $\eps_t' = 2\eta (t-1)/n$ by \Cref{def:expmech}.

  By the standard composition theorem (\Cref{composition}), the total privacy cost is
  \[
    \eps = \sum_{t = 1}^{T} s\eps_t' = \frac{2 \eta s}{n} \cdot
    \sum_{t = 1}^{T} (t-1) = \frac{\eta T(T - 1)s}{n}.
  \]
\end{proof}

We next show that \dq is $(\eps, \delta)$-differentially private, for a much
smaller $\eps$.


\begin{theorem}
  \label{thm:privacy}
  Let $0 < \delta < 1$. \dq is $(\eps, \delta)$-differentially private for
  \[
    \eps = \frac{2 \eta (T-1)}{n} \cdot \left[ \sqrt{2s(T-1) \log(1/\delta)} + s(T-1) \left(
        \exp\left( \frac{2 \eta (T-1)}{n} \right) - 1 \right) \right].
  \]
\end{theorem}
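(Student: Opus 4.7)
The plan is to reuse the structural observation from the proof of \Cref{thm:privacy2} and then swap in the advanced composition bound from \Cref{composition} instead of basic composition. Concretely, the proof of \Cref{thm:privacy2} already established that each of the $s$ samples drawn in round $t$ is an invocation of the exponential mechanism with quality score $r(D,q)=\sum_{i<t}(q(D)-q(x^i))$, whose sensitivity is $(t-1)/n$, and so each such sample is $\eps_t'$-differentially private for $\eps_t' = 2\eta(t-1)/n$. I will take this as given.

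From here, the first step is to note that in round $t=1$ the quality score is identically zero, so sampling from $Q^1$ is just uniform sampling and costs nothing privacy-wise. Thus only the samples from rounds $t=2,\dots,T$ contribute, giving a total of $k := s(T-1)$ adaptively composed mechanisms. The second step is to uniformly upper bound each individual privacy cost by the worst case $\eps' := \eps_T' = 2\eta(T-1)/n$, which is the largest $\eps_t'$ since $\eps_t'$ increases in $t$.

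The third step is to plug $k = s(T-1)$ and $\eps' = 2\eta(T-1)/n$ into the $(\eps,\delta)$ bound of \Cref{composition}, giving
\[
  \eps \;=\; \sqrt{2\log(1/\delta)\,k}\,\eps' \;+\; k\,\eps'\bigl(e^{\eps'}-1\bigr)
  \;=\; \sqrt{2\log(1/\delta)\,s(T-1)}\cdot \frac{2\eta(T-1)}{n}
  \;+\; s(T-1)\cdot \frac{2\eta(T-1)}{n}\!\left(\exp\!\left(\frac{2\eta(T-1)}{n}\right)-1\right).
\]
The final step is purely algebraic: factor out the common $2\eta(T-1)/n$ to recover exactly the expression claimed in the theorem.

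There is no real obstacle here — the nontrivial content is in \Cref{thm:privacy2}, which already identified each sample as an exponential-mechanism call with quality $\eps_t'$. The only things to be careful about are (i) correctly counting the number of composed subroutines as $s(T-1)$ rather than $sT$, by observing that round $1$ is free, and (ii) using the uniform bound $\eps_t'\le \eps_T'$ before invoking advanced composition, since \Cref{composition} is stated in terms of a common upper bound $\eps'$ on the per-mechanism privacy losses.
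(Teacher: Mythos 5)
Your proposal is correct and follows essentially the same route as the paper: reuse the exponential-mechanism interpretation from the pure-$\eps$ proof, bound each per-sample cost by the worst case $\eps'=2\eta(T-1)/n$, count $k=s(T-1)$ composed mechanisms (round $1$ being free), and invoke the advanced composition bound. If anything, you are slightly more explicit than the paper about the uniform upper bound $\eps_t'\le\eps_T'$, which the paper states only implicitly by quoting the $(T-1)/n$ sensitivity directly.
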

\begin{proof}
  Let $\eps$ be defined by the above equation. By the advanced composition
  theorem (\Cref{composition}), running a composition of $k$ $\eps'$-private
  mechanisms is $(\eps, \delta)$-private for
  \[
    \eps = \sqrt{2k \log(1/\delta)} \eps' + k \eps' (\exp(\eps') -
    1).
  \]
  Again, note that sampling from $Q^t$ is simply sampling from the exponential
  mechanism, with a $(T-1)/n$-sensitive quality score. Thus, the privacy cost of
  each sample is $\eps' = 2\eta (T-1)/n$ by \Cref{def:expmech}. We plug in $k
  = s(T-1)$ samples, as in the first round our samplings are $0$-differentially
  private.
\end{proof}

\subsection*{Accuracy}

The accuracy proof proceeds in two steps. First, we show that ``average query''
formed from the samples is close to the true weighted distribution $Q^t$. We
will need a standard Chernoff bound.

\begin{lemma}[Chernoff bound] \label{chernoff}
  Let $X_1, \dots, X_N$ be IID random variables with mean $\mu$, taking
  values in $[0, 1]$.  Let $\bar{X} = \frac{1}{N} \sum_i X_i$ be the empirical
  mean.  Then,
  \[
    \Pr[ |\bar{X} - \mu| > T ] < 2 \exp (-NT^2/3)
  \]
  for any $T$.
\end{lemma}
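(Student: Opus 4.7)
The plan is to prove the two-sided tail bound by the standard Chernoff/MGF technique, handling each tail separately and then applying a union bound. For the upper tail, I would fix $\lambda > 0$ and observe that, by Markov's inequality applied to the nonnegative random variable $\exp(\lambda \sum_i X_i)$,
\[
\Pr\left[\bar{X} - \mu > T\right] = \Pr\left[\exp\!\left(\lambda \sum_i (X_i - \mu)\right) > e^{\lambda N T}\right] \leq e^{-\lambda N T} \prod_{i=1}^{N} \Ex{}{e^{\lambda (X_i - \mu)}} .
\]
The IID assumption lets me collapse the product into an $N$th power, so everything reduces to bounding the moment generating function of a single centered $[0,1]$-valued random variable.

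Next I would bound $\Ex{}{e^{\lambda(X_i - \mu)}}$. The cleanest route is to use the inequality $e^y \leq 1 + y + y^2$ valid for $y \leq 1$ (or, alternatively, the standard bound $\Ex{}{e^{\lambda(X-\mu)}} \leq \exp(\lambda^2 \mathrm{Var}(X)/2 + O(\lambda^3))$ valid for small $\lambda$), combined with the fact that $\mathrm{Var}(X_i) \leq \mu(1-\mu) \leq 1/4$ and more generally $\Ex{}{(X_i-\mu)^2} \leq 1$ since $|X_i-\mu|\leq 1$. This gives a bound of the form $\exp(\lambda^2 N / 2)$ or $\exp(\lambda^2 N / (\text{const}))$ after taking the product, and then optimizing $\lambda = T$ (or a related choice) yields a single-tail bound of the form $\exp(-NT^2/3)$. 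The factor $3$ rather than $2$ comes from using the crude $e^y \leq 1 + y + y^2$ style estimate, which is the usual source of the constant in this version of Chernoff.

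For the lower tail I would apply exactly the same argument to the random variables $1 - X_i$, which also take values in $[0,1]$ and have mean $1 - \mu$; this produces the symmetric bound $\Pr[\mu - \bar{X} > T] < \exp(-NT^2/3)$. A union bound over the two tails then gives the stated factor of $2$. The only step with any subtlety is getting the constant right in the MGF bound; everything else is mechanical. Since this is a completely standard Chernoff-type inequality, I would present the proof at essentially the level of a citation to, e.g., the Alon--Spencer or Dubhashi--Panconesi textbook treatment.
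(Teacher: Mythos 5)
The paper does not actually supply a proof of this lemma: it is stated as a standard Chernoff bound and used as a black box in the proof of Lemma~\ref{lem:sampling}. There is therefore no ``paper's own proof'' to compare against, and the appropriate comparison is to the textbook argument you are implicitly citing.

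Your sketch is the standard exponential-moment derivation and it is correct in outline: apply Markov's inequality to $\exp\bigl(\lambda\sum_i(X_i-\mu)\bigr)$, use independence to factor the moment generating function, bound the single-variable MGF, optimize over $\lambda$, and union-bound the two tails. Two small points worth being careful about if you were to write this out in full. First, the inequality $e^y \le 1 + y + y^2$ is only valid for $y \le 1$ (in fact for $|y|\le 1$ is the usual form), so after centering you need the restriction $\lambda \le 1$; this is harmless because for $T > 1$ the event $|\bar X - \mu| > T$ is impossible for $[0,1]$-valued variables, so you may assume $T \le 1$ and the optimizing $\lambda$ stays in range. Second, the stated constant $1/3$ is actually looser than what the cleanest versions of this calculation give (Hoeffding yields $2\exp(-2NT^2)$, and even the crude second-moment bound $\mathbb{E}[(X_i-\mu)^2]\le 1$ gives $2\exp(-NT^2/4)$), so any of the routes you mention will produce a bound at least as strong as the one in the lemma; the precise choice of $\lambda$ is not delicate.
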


\begin{lemma}
  \label{lem:sampling}
  Let $\beta \in (0, 1)$, and let $p$ be a distribution over queries. Suppose we draw
  \[
    s = \frac{48 \log \left(\frac{2 |\univ|}{\beta} \right)}{\alpha^2}
  \]
  samples $\{\hat{q_i}\}$ from $p$, and let $\bar{q}$ be the aggregate query
  \[
    \bar{q} = \frac{1}{s} \sum_{i = 1}^s \hat{q_i}.
  \]
  Define the true weighted answer $Q(x)$ to be
  \[
    Q(x) = \sum_{i = 1}^{|\cQ|} p_i q_i(x).
  \]
  Then with probability at least $1 - \beta$, we have $|\bar{q}(x) - Q(x) | <
  \alpha/4$ for every $x \in \cX$.
\end{lemma}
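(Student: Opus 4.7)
The plan is to apply the Chernoff bound pointwise in $x$ and then take a union bound over the data universe $\cX$. Fix any $x \in \cX$. Since the queries $\hat q_i$ are drawn IID from $p$, the random variables $\hat q_i(x)$ are IID with values in $\{0,1\}$ (each $q_i \in \cQ$ evaluates to $0$ or $1$ on a single record) and expectation
\[
\Expectation[\hat q_i(x)] = \sum_{i=1}^{|\cQ|} p_i q_i(x) = Q(x).
\]
Their empirical average is exactly $\bar q(x)$, so the setup matches \Cref{chernoff} with $\mu = Q(x)$ and $N = s$.

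Next, instantiate \Cref{chernoff} at deviation $T = \alpha/4$. This gives
\[
\Pr\bigl[\,|\bar q(x) - Q(x)| > \alpha/4\,\bigr] < 2 \exp\!\left(-\frac{s\alpha^2}{48}\right).
\]
Plugging in $s = 48 \log(2|\univ|/\beta)/\alpha^2$, the right-hand side simplifies to exactly $\beta/|\univ|$.

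Finally, union-bound over all $x \in \univ$: the probability that there exists some $x$ with $|\bar q(x) - Q(x)| \geq \alpha/4$ is at most $|\univ| \cdot (\beta/|\univ|) = \beta$. Therefore, with probability at least $1-\beta$, the bound $|\bar q(x) - Q(x)| < \alpha/4$ holds simultaneously for every $x \in \cX$, which is exactly the claim.

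There is no real obstacle here: the only things to be careful about are (i) using the $\{0,1\}$-valuedness of linear queries so that Chernoff applies with the stated constants, and (ii) making sure the constant $48$ is the correct one to absorb both the factor of $3$ in the Chernoff exponent, the factor of $16$ from $T^2 = \alpha^2/16$, and the factor of $2$ from the $2|\univ|/\beta$ inside the log (which pays for the two-sided bound before the union bound).
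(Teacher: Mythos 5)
Your proof is correct and follows exactly the same approach as the paper's: apply the Chernoff bound (\Cref{chernoff}) to the IID random variables $\hat q_i(x)$ with deviation $T = \alpha/4$, plug in the chosen $s$ to get a per-$x$ failure probability of $\beta/|\univ|$, and then union bound over $x \in \cX$. Your closing remark about how the constant $48 = 3 \cdot 16$ and the factor $2$ inside the log are allocated is a correct accounting and matches what the paper implicitly relies on.
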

\begin{proof}
  For any fixed $x$, note that $\bar{q}(x)$ is the average of random variables
  $\hat{q_1}(x), \dots ,\hat{q_s}(x)$. Also, note that $\mathbb{E}[\bar{q}(x)] =
  Q(x)$.  Thus, by the Chernoff bound (\Cref{chernoff}) and our choice of $s$,
  \[
    \Pr[ |\bar{q}(x) - Q(x)| > \alpha/4 ] < 2 \exp (-s\alpha^2/48) =
    \beta / |\univ|.
  \]
  By a union bound over $x \in \cX$, this equation holds for all $x \in \cX$
  with probability at least $1 - \beta$.
\end{proof}

Next, we show that we compute an approximate equilibrium of the query release
game. In particular, the record best responses form a synthetic database that
answer all queries in $\cQ$ accurately. Note that our algorithm doesn't require
an exact best response for the data player; an approximate best response will
do.
\ar{I'm a bit confused by the following theorem -- I guess $\eps$ and
  $\delta$ are the privacy parameters associated with the algorithm parameters
  specified in the algorithm -- but we haven't actually stated that the
  algorithm is (eps,delta) private yet -- maybe we should state that as a
  corollary of the privacy theorem.}
\jh{Fixed, hopefully. Do you think we should present the remark in the short
  version?}

\begin{theorem}
  \label{thm:utility}
  With probability at least $1 - \beta$, \dq finds a synthetic database that
  answers all queries in $\cQ$ within additive error
  $\alpha$.
\end{theorem}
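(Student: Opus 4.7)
The plan is to apply the Freund--Schapire convergence theorem (\Cref{thm:mw-eq}) to show that \dq computes an $\alpha$-approximate mixed Nash equilibrium of the query release game, and then to invoke the game-theoretic analysis from \Cref{sec:qr:game} to convert that equilibrium statement into the claim that $\hat D = \bigcup_t x^t$ answers every query in $\cQ$ within additive error $\alpha$.

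The first step is to handle the sampling bridge between the true MW distribution $Q^t$ and the aggregate $\tilde q = \frac{1}{s}\sum_i q_i$ that the algorithm actually uses at round $t$. Since the data player best-responds against $\tilde q$ rather than $Q^t$, I need $\tilde q$ to be uniformly close to $Q^t$ on $\univ$. I would apply \Cref{lem:sampling} once per round with failure probability $\beta/T$; the prescribed $s = 48\log(2|\univ|T/\beta)/\alpha^2$ is exactly what this invocation requires. Union-bounding over $t \in [T]$ shows that with probability at least $1-\beta$,
\[
|\tilde q(x) - Q^t(x)| < \alpha/4 \text{ for all } t \in [T] \text{ and } x \in \univ,
\]
where $Q^t(x) := \mathbb{E}_{q \sim Q^t}[q(x)]$. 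Averaging this inequality over $x \in D$ gives $|\tilde q(D) - Q^t(D)| < \alpha/4$, hence $|A_D(x,\tilde q) - A_D(x, Q^t)| < \alpha/2$ uniformly in $x$.

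Combining this uniform bound with the $\alpha/4$-approximate best response guarantee that the algorithm enforces against $\tilde q$, a short triangle-inequality argument shows that $x^t$ is an $O(\alpha)$-approximate best response against the true $Q^t$. With the constants of \Cref{lem:sampling} and the algorithm's slack rescaled by an absolute constant, this becomes the $\alpha/2$-approximate best response that \Cref{thm:mw-eq} demands. Since the query player runs MW on $\cQ$ with the prescribed $T = 16\log|\cQ|/\alpha^2$ and $\eta = \alpha/4$, \Cref{thm:mw-eq} then yields that the empirical strategies $\bar Q := \frac{1}{T}\sum_t Q^t$ and $u^\star := \frac{1}{T}\sum_t \delta_{x^t}$ form an $\alpha$-approximate mixed Nash equilibrium of the zero-sum game with payoff $A_D$. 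Note that the MW update $Q^{t+1}_q \propto Q^t_q\exp(-\eta A_D(x^t,q))$ treats the query player as the loss-minimizer in \Cref{thm:mw-eq} and the data player as the maximizer; this corresponds to the ``reversed'' game, but the negation-closure of $\cQ$ keeps the value equal to $0$ and makes the equilibrium guarantee equally informative.

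Finally, as in \Cref{sec:qr:game}, applying the $\alpha$-approximate equilibrium condition to each $q \in \cQ$ and to its negation $\bar q \in \cQ$ and using $v_A = 0$ yields $|q(u^\star) - q(D)| \leq \alpha$ for every $q \in \cQ$. Since $\hat D$ realizes $u^\star$ exactly as a multiset of size $T$, every query in $\cQ$ is answered by $\hat D$ within additive error $\alpha$, which is the claim. The main obstacle is the bookkeeping in the previous paragraph: making sure the sampling error plus the $\alpha/4$ slack in the approximate best response still fits inside the $\alpha/2$ tolerance required by \Cref{thm:mw-eq}; everything else is a direct application of the game analysis and of Freund--Schapire convergence.
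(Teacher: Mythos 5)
Your proof follows the paper's argument essentially step for step: invoke \Cref{lem:sampling} with failure probability $\beta/T$, union-bound over the $T$ rounds, argue that each $x^t$ is an approximate best response to the true sampling distribution $Q^t$, feed this into \Cref{thm:mw-eq} with $T = 16\log|\cQ|/\alpha^2$ and $\eta = \alpha/4$, and conclude via the equilibrium-to-accuracy translation of \Cref{sec:qr:game}. You are in fact slightly more careful than the paper about the constants---the paper's terse ``$\alpha/4 + \alpha/4 = \alpha/2$'' actually accumulates to $3\alpha/4$ once the $\alpha/4$ sampling error is charged both at $x^t$ and at the true minimizer (the $D$-term cancels, so your $\alpha/2$ payoff-level bound is slightly looser than needed but still of the right order), and you correctly note that an absolute-constant rescaling of $s$ or the oracle slack closes the gap.
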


\begin{proof}
  As discussed in \Cref{sec:qr:game}, it suffices to show that the distribution
  of best responses $x^1, \dots, x^T$ forms is an $\alpha$-approximate
  equilibrium strategy in the query release game. First, we set the number of
  samples $s$ according to in \Cref{lem:sampling} with failure probability
  $\beta / T$.  By a union bound over $T$ rounds, sampling is successful for
  every round with probability at least $1 - \beta$; condition on this event.

  Since we are finding an $\alpha/4$ approximate best response to the sampled
  aggregate query $\bar{q}$, which differs from the true distribution by at most
  $\alpha/4$ (by \Cref{lem:sampling}), each $x^i$ is an $\alpha/4 + \alpha/4 =
  \alpha/2$ approximate best response to the true distribution $Q^t$. Since $q$
  takes values in $[0, 1]$, the payoffs are all in $[-1, 1]$. Hence,
  \Cref{thm:mw-eq} applies; setting $T$ and $\eta$ accordingly gives the result.
\end{proof}

\begin{remark}
  The guarantee in \Cref{thm:utility} may seem a little unusual, since the
  convention in the literature is to treat $\eps, \delta$ as inputs to the
  algorithm. We can do the same: from \Cref{thm:privacy} and plugging in for $T,
  \eta, s$, we have
  \begin{align*}
    \eps = \frac{4\eta T \sqrt{2 sT \log(1/\delta)}}{n}
    = \frac{256 \log^{3/2} |\cQ| \sqrt{6 \log (1/\delta)
        \log(2|\cX|T/\beta)}}{\alpha^3 n}.
  \end{align*}
  Solving for $\alpha$, we find
  \[
    \alpha = O \left( \frac{\log^{1/2}|\cQ| \log^{1/6} (1/\delta)
        \log^{1/6}(2|\cX|/\gamma)}{n^{1/3} \eps^{1/3}} \right),
  \]
  for $\gamma < \beta/T$.
\end{remark}

\section{Case study: 3-way marginals}

In our algorithm, the computationally difficult step is finding the data
player's approximate best response against the query player's distribution. As
mentioned above, the form of this problem depends on the particular query class
$\cQ$. In this section, we first discuss the optimization problem in general,
and then specifically for the well-studied class of \emph{marginal} queries.  For instance, in a database of medical information
in binary attributes, a particular marginal query may be: What fraction of the
patients are over 50, smoke, and exercise?

\subsection*{The best-response problem}
Recall that the query release game has payoff $A(x, q)$ defined by
\Cref{eq:payoff}; the data player tries to minimize the payoff, while the query
player tries to maximize it. If the query player has distribution $Q^t$ over
queries, the data player's best response minimizes the expected loss:
\[
  \argmin_{x \in \cX} \Ex{q \leftarrow Q^t}{q(D) - q(x)}.
\]

To ensure privacy, the data player actually plays against the distribution of
samples $\hat{q_1}, \dots, \hat{q_s}$. Since the database $D$ is fixed and
$\hat{q_i}$ are linear queries, the best-response problem is
\[
  \argmin_{x \in \cX} \frac{1}{s} \sum_{i = 1}^s \hat{q_i}(D) - \hat{q_i}(x)
  = \argmax_{x\in \cX} \sum_{i=1}^s \hat{q_i}(x).
\]

By \Cref{thm:utility} it even suffices to find an approximate maximizer, in
order to guarantee accuracy.

\subsection*{3-way marginal queries}
To look at the precise form of the best-response problem, we consider {\em
  3-way marginal} queries. We think of records as having $d$ binary attributes,
so that the data universe $|\cX|$ is all bitstrings of length $d$. We write
$x_i$ for $x \in \cX$ to mean the $i$th bit of record $x$.
\begin{definition}
  Let $\cX = \{0, 1\}^d$. A {\em 3-way marginal query} is a linear query
  specified by 3 integers $a \neq b \neq c \in [d]$,
  taking values
  \[
    q_{abc}(x) = \left\{
      \begin{array}{ll}
        1 &: x_a = x_b = x_c = 1 \\
        0 &: \text{otherwise.}
      \end{array}
    \right.
  \]
\end{definition}

Though everything we do will apply to general $k$-way marginals, for
concreteness we will consider $k = 3$: 3-way marginals.

Recall that the query class $\cQ$ includes each query and its negation.  So, we also
have negated conjunctions:
  \[
    \bar{q_{abc}}(x) = \left\{
      \begin{array}{ll}
        0 &: x_a = x_b = x_c = 1 \\
        1 &: \text{otherwise.}
      \end{array}
    \right.
  \]

Given sampled conjunctions $\{\hat{u_i}\}$ and negated conjunctions
$\{\hat{v_i}\}$, the best-response problem is
\[
  \argmax_{x \in \cX} \sum_i \hat{u_i}(x) + \sum_j \hat{v_j}(x).
\]
In other words, this is a MAXCSP problem---we can associate a clause to each
conjunction:
\[
  q_{abc} \Rightarrow  (x_a \wedge x_b \wedge x_c)
  \quad \text{and} \quad
  \bar{q_{abc}} \Rightarrow   (\bar{x_a} \vee \bar{x_b} \vee \bar{x_c}),
\]
and we want to find $x \in \{0, 1\}^d$ satisfying as many clauses as
possible.\footnote{%
  Note that this is almost a MAX3SAT problem, except there are also
  ``negative'' clauses.}\eg{I wonder about the footnote...}

Since most solvers do not directly handle MAXCSP problems, we convert this
optimization problem into a more standard, integer program form. We introduce a
variable $x_i$ for each literal $x_i$, a variable $c_i$ for each sampled query,
positive or negative.  Then, we form the following integer program encoding to
the best-response MAXCSP problem.
\begin{align*}
  \max &\sum_i c_i & \\
  \text{such that } & x_a + x_b + x_c \geq 3c_i &\text{ for each } \hat{u_i} = q_{abc} \\
  & (1 - x_a) + (1 - x_b) + (1 - x_c) \geq
  c_j &\text{ for each } \hat{v_j} = \bar{q_{abc}} \\
  & x_i, c_i \in \{0, 1\}
\end{align*}
%
Note that the expressions $x_i, 1 - x_i$ encode the literals $x_i, \bar{x_i}$,
respectively, and the clause variable $c_i$ can be set to $1$ exactly when the
respective clause is satisfied.  Thus, the objective is the number of satisfied
clauses. The resulting integer program can be solved using any standard solver;
we use
\cplex.

\section{Case study: Parity queries}
In this section, we show how to apply \dq to another well-studied class of
queries: {\em parities}. Each specified by a subset $S$ of features, these
queries measure the number of records with an even number of bits on in $S$
compared to the number of records with an odd number of bits on in $S$.

\begin{definition}
  Let $\cX = \{-1, +1\}^d$. A {\em $k$-wise parity query} is a linear query
  specified by a subset of features $S \subseteq [d]$ with $|S| = k$, taking
  values
  \[
    q_S(x) = \left\{
      \begin{array}{ll}
        +1 &: \text{even number of } x_i = +1 \text{ for } i \in S \\
        -1 &: \text{otherwise.}
      \end{array}
    \right.
  \]
  Like before, we can define a {\em negated $k$-wise parity query}:
  \jh{Consider: call them even and odd parity queries instead? Maybe more
    intuitive.}
  \[
    \bar{q_S}(x) = \left\{
      \begin{array}{ll}
        +1 &: \text{odd number of } x_i = +1 \text{ for } i \in S \\
        -1 &: \text{otherwise.}
      \end{array}
    \right.
  \]
\end{definition}




For the remainder, we specialize to $k = 3$.
Given sampled parity queries $\{ \hat{u_i} \}$ and negated parity queries $\{
\hat{v_i} \}$, the best response problem is to find the record $x \in \cX$ that
takes value $1$ on as many of these queries as possible. We can construct an
integer program for this task: introduce $d$ variables $x_i$, and two variables
$c_q, d_q$ for each sampled query. The following integer program encodes the
best-response problem.
\begin{align*}
  \max &\sum_i c_i & \\
  \text{such that } & \sum_j \sum_{p \in S_j} x_p = 2 d_i + c_i - 1 & \text{for each } \hat{u_i} = q_S \\
  & \sum_j \sum_{p \in S_j} x_p = 2 d_i + c_i & \text{for each } \hat{v_i} = \bar{q_S} \\
  & x_p, c_i \in \{0, 1\}, \quad d_i \in \{ 0, 1, 2 \} &
\end{align*}
Consider the (non-negated) parity queries first. The idea is that each variable
$c_i$ can be set to $1$ exactly when the corresponding parity query takes value
$1$, i.e., when $x$ has an even number of bits in $S$ set to $+1$. Since $|S|
\leq 3$, this even number will either be $0$ or $2$, hence is equal to $2d_i$
for $d_i \in \{0, 1\}$. A similar argument holds for the negated parity queries.

\begin{figure}[h]
  \centering
  \begin{center}
    \begin{tabular}{ | l || l | l |l |}
      \hline
      Dataset & Size   & Attributes & Binary attributes  \\
      \hline
      Adult   & 30162  & 14         & 235 \\
      KDD99   & 494021 & 41         & 396 \\
      Netflix & 480189 & 17,770     & 17,770 \\
      \hline
    \end{tabular}
  \end{center}
  \caption{Test Datasets}
  \label{fig:data}
\end{figure}

\section{Experimental evaluation}
\begin{figure*}[ht]
  \centering
  \subfloat{
    \includegraphics[width=0.325\textwidth]{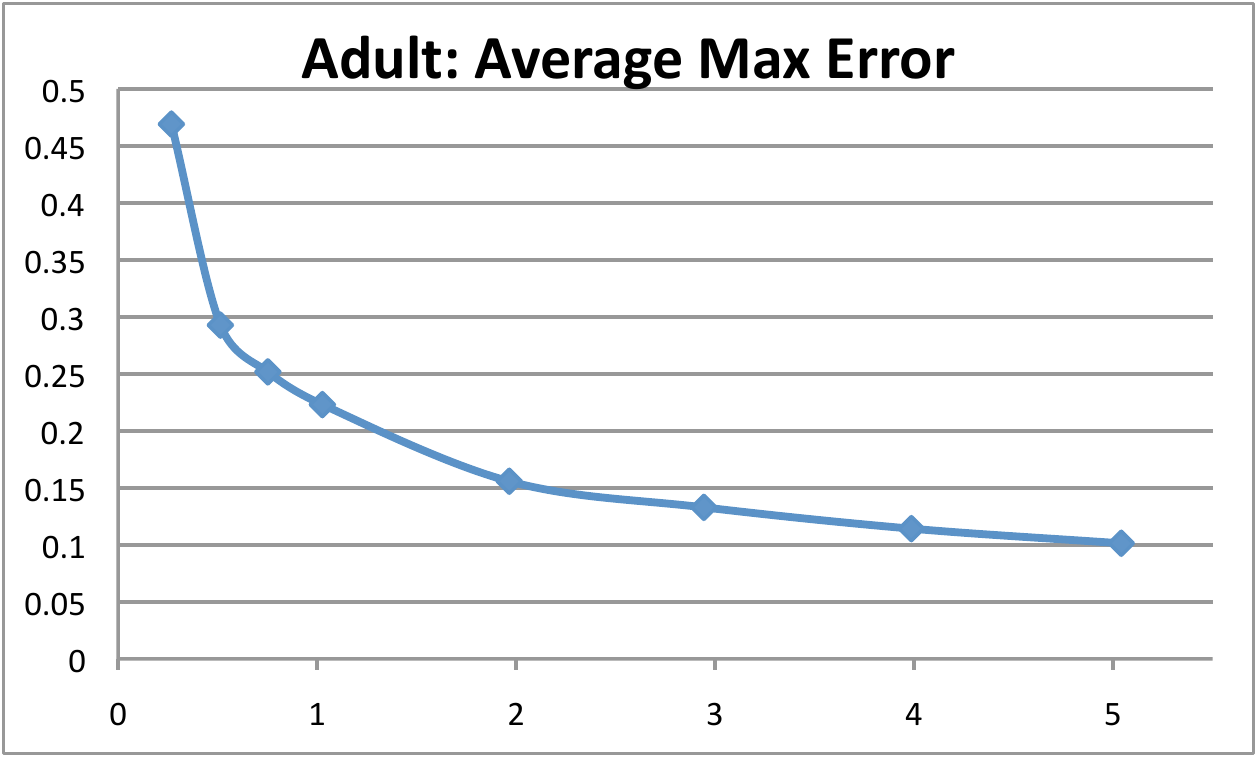}}
  \hfill
  \subfloat{
    \includegraphics[width=0.325\textwidth]{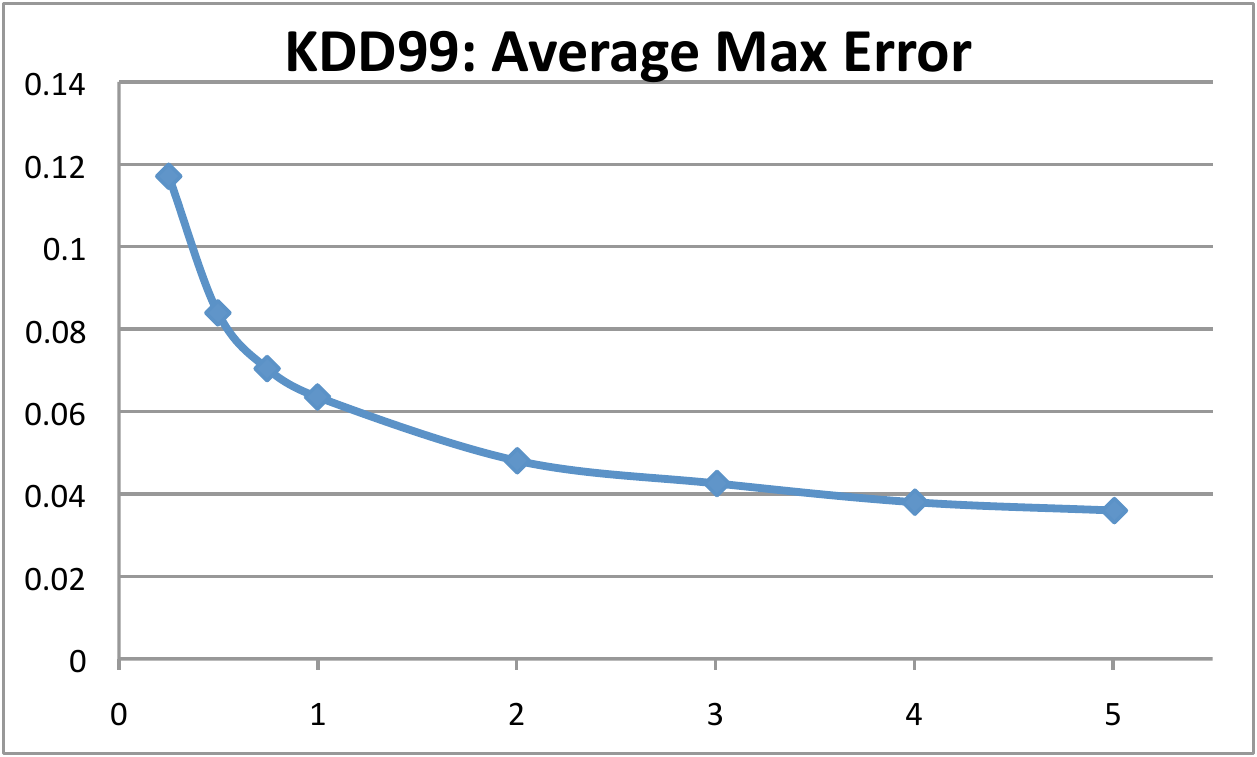}}
  \hfill
  \subfloat{
    \includegraphics[width=0.325\textwidth]{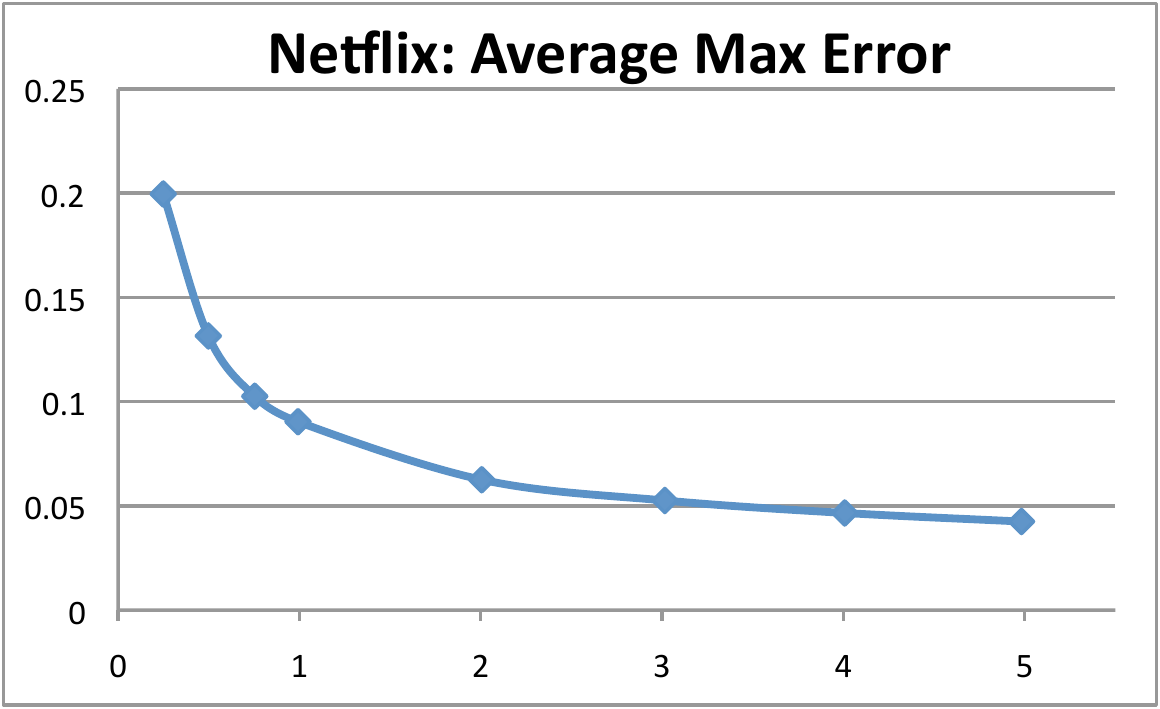}}

  \caption{Average max error of $(\eps, 0.001)$-private \dq on $500{,}000$
    $3$-way marginals versus $\eps$.}

  \label{fig:accuracy}
\end{figure*}
\begin{figure*}[ht]
  \centering
  \subfloat{
    \includegraphics[width=0.325\textwidth]{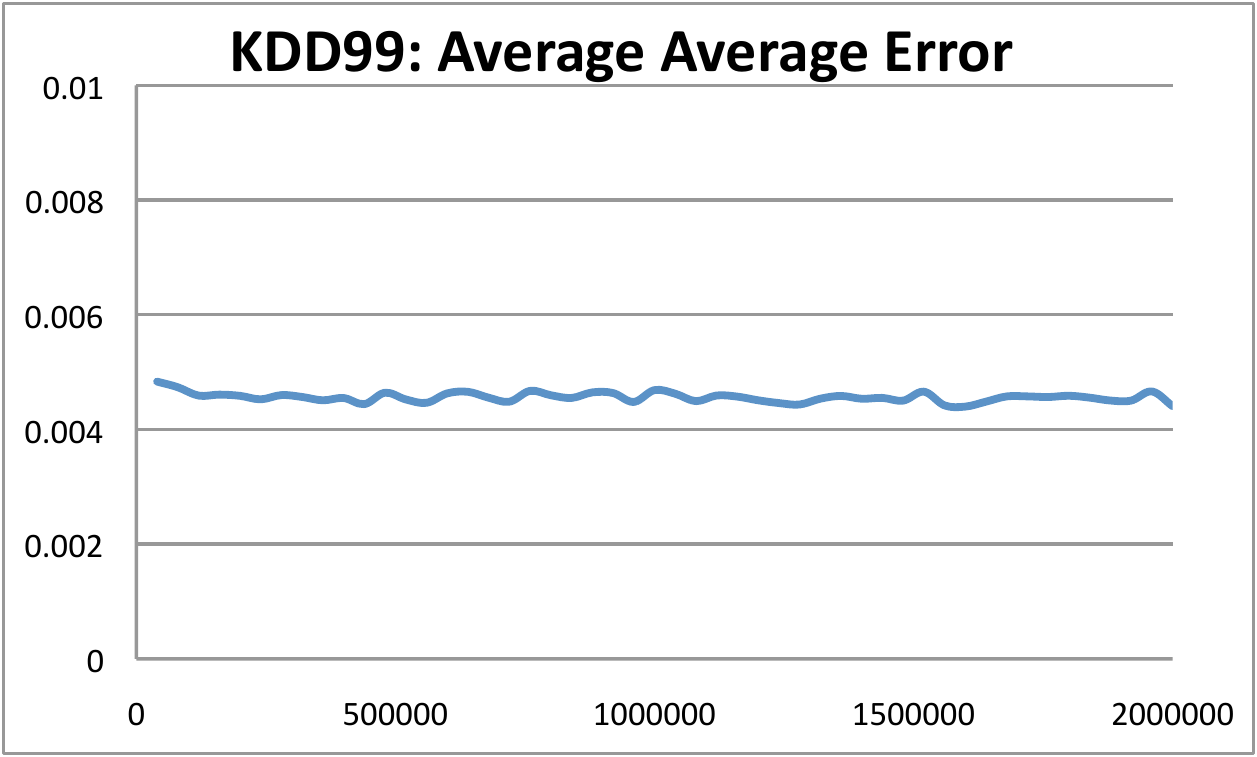}%
    \label{fig:q:kdd-avg}}
  \hfill
  \subfloat{
    \includegraphics[width=0.325\textwidth]{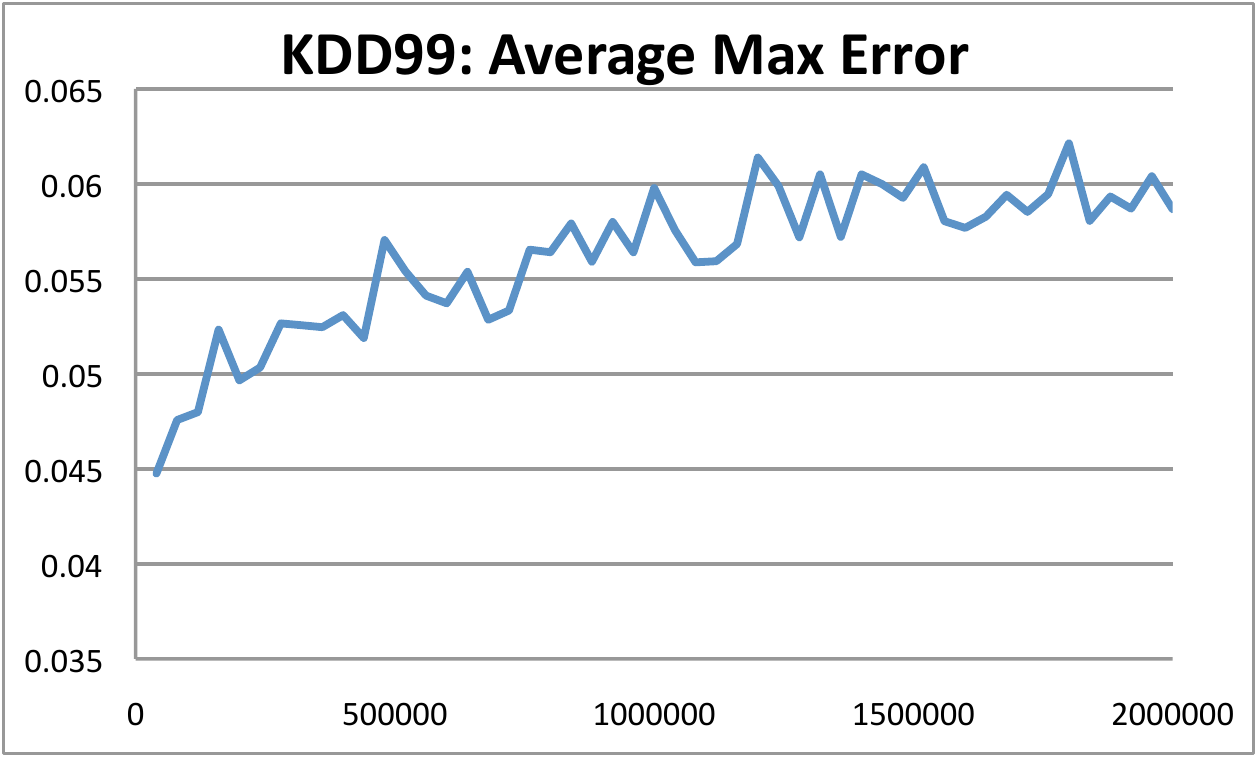}%
    \label{fig:q:kdd-max}}
  \hfill
  \subfloat{
    \includegraphics[width=0.325\textwidth]{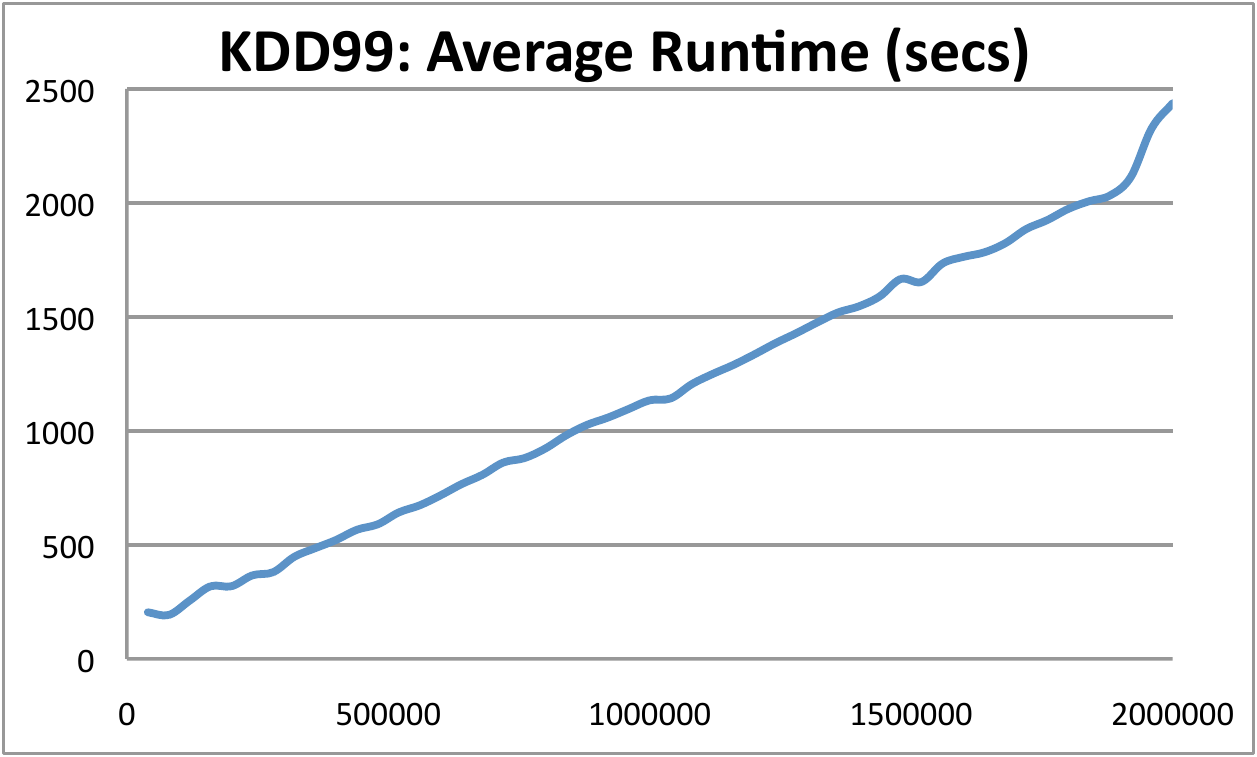}%
    \label{fig:q:kdd-rt}}

  \caption{Error and runtime of $(1, 0.001)$-private \dq on KDD99 versus number
    of queries.}
  \label{fig:queries}
\end{figure*}

\begin{figure*}[ht]
  \centering
  \subfloat{
    \includegraphics[width=0.325\textwidth]{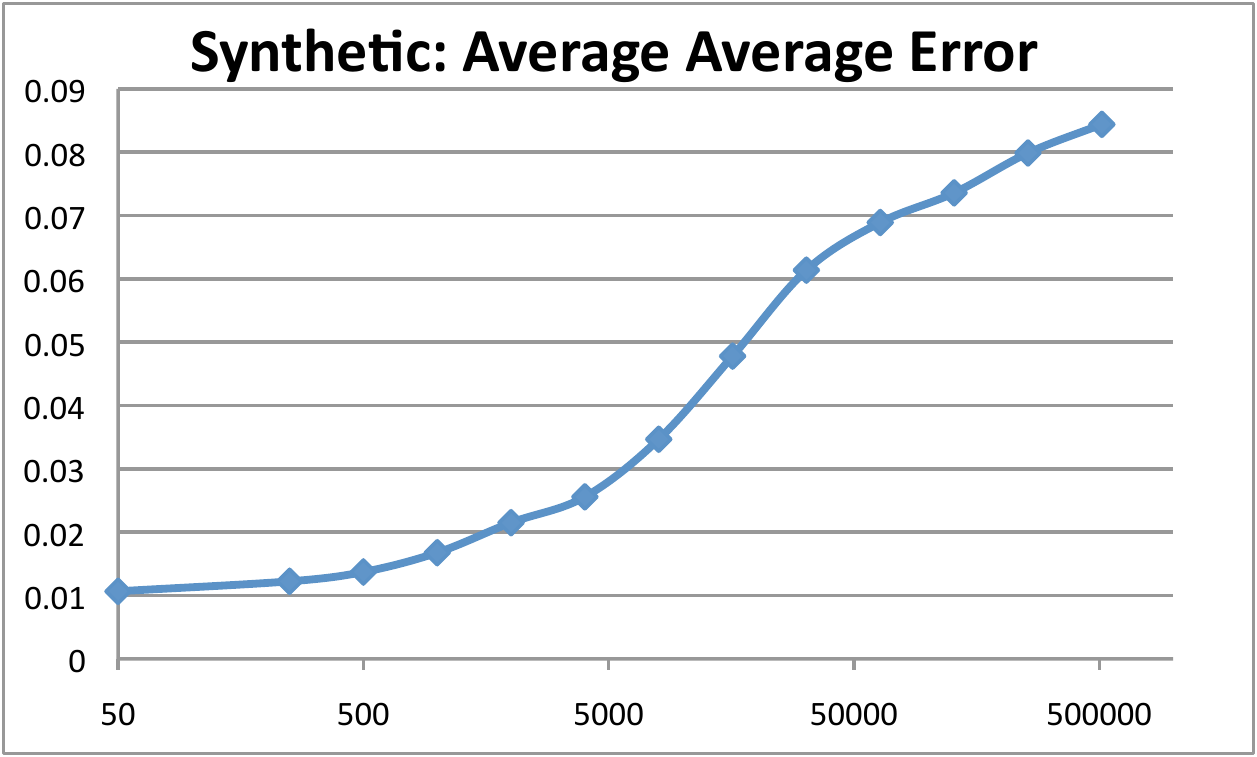}%
    \label{fig:att-error}}
  \hfill
  \subfloat{
    \includegraphics[width=0.325\textwidth]{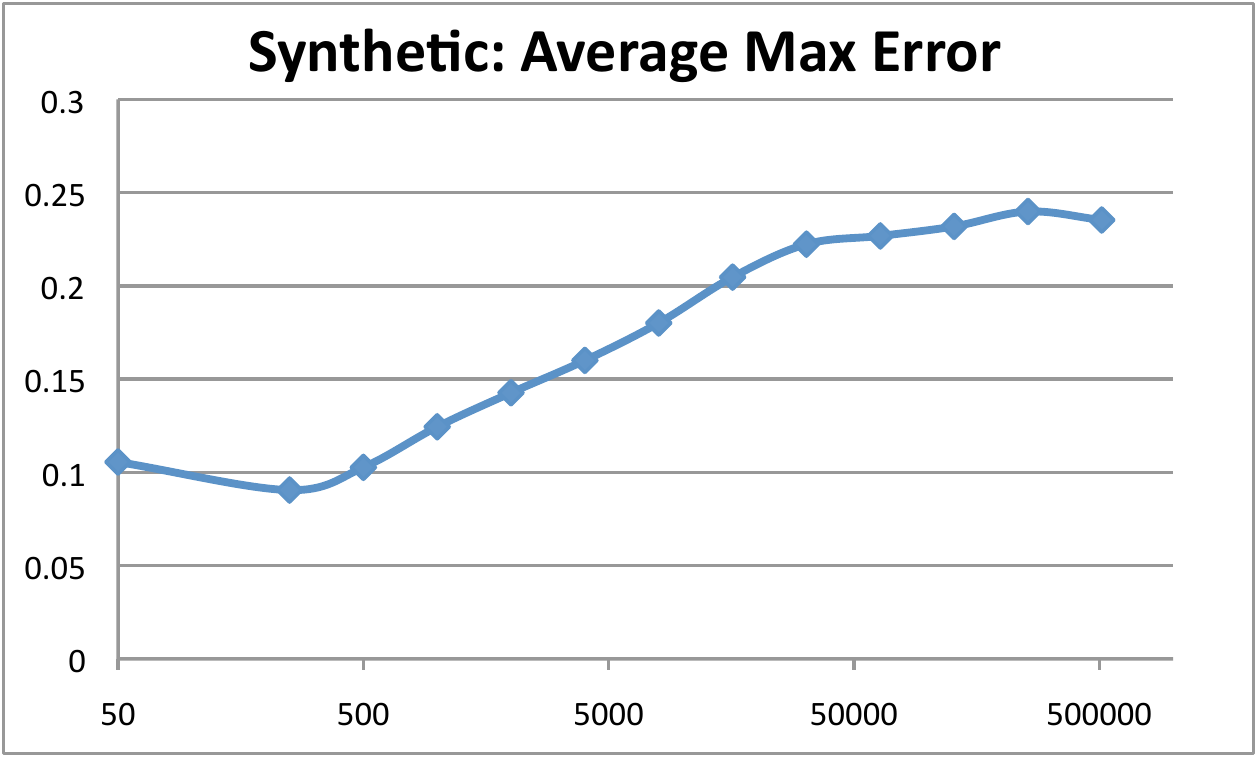}%
    \label{fig:att-max}}
  \hfill
  \subfloat{
    \includegraphics[width=0.325\textwidth]{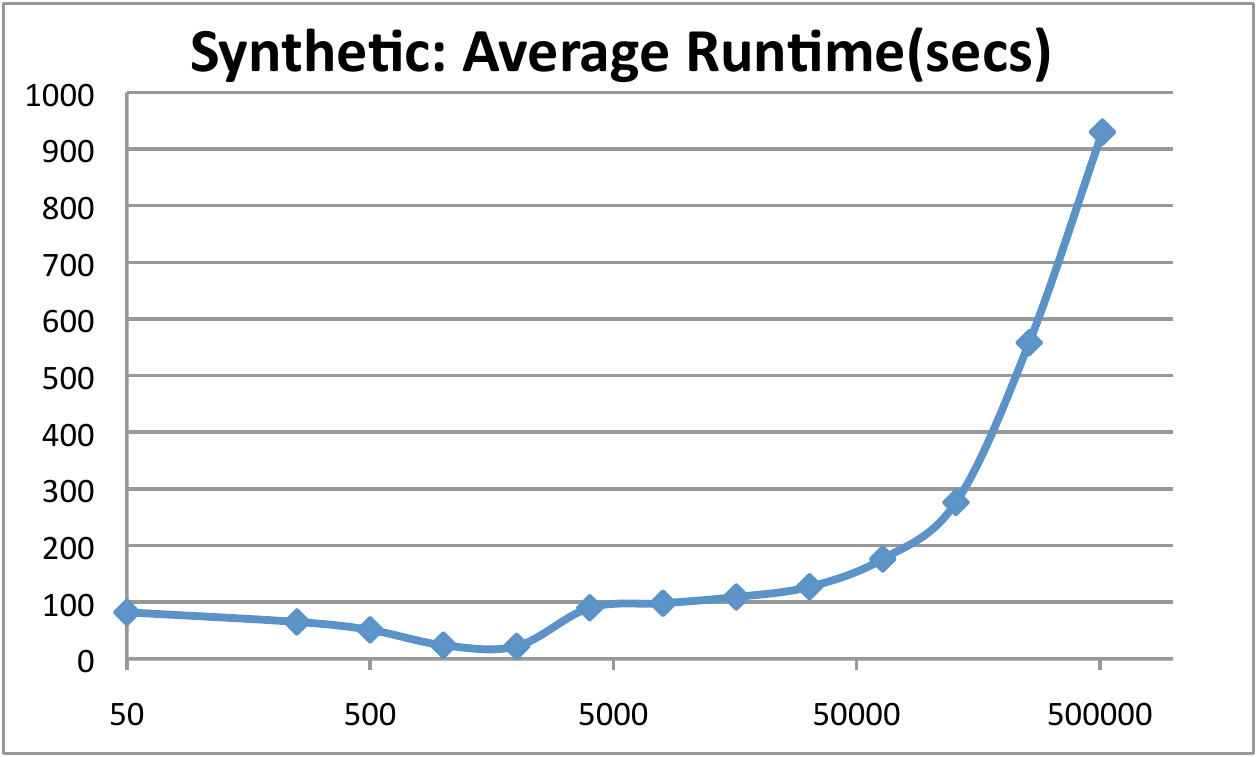}
    \label{fig:att-rt}}

  \caption{Error and runtime of $(1, 0.001)$-private \dq on $100{,}000$ $3$-way
    marginal queries versus number of attributes.}
\label{fig:attributes}
\end{figure*}

\begin{figure*}[h]
  \centering
  \subfloat{
    \includegraphics[width=0.35\textwidth]{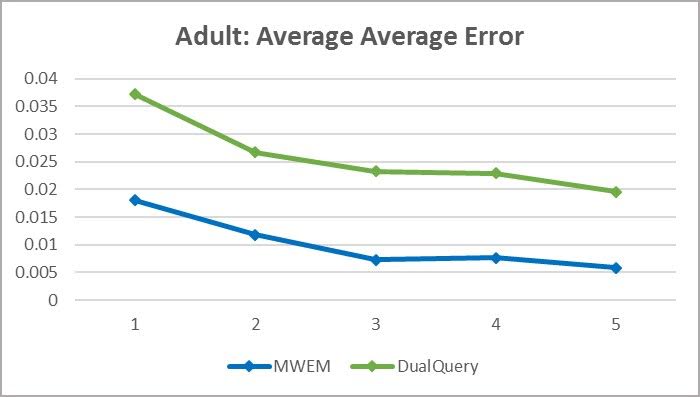}}
  \subfloat{
    \includegraphics[width=0.35\textwidth]{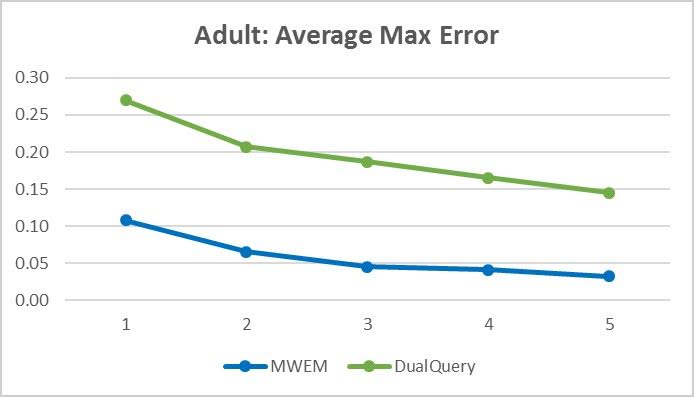}}

  \caption{Comparison of the accuracy performance between \dq and
    \MWEM on a low-dimensional dataset with 17 attributes. Both
    algorithms answer 10,000 queries for the Adult dataset under
    $(\eps, 0)$-differential privacy, where $\eps$ ranges from 1 to
    5. In both plots, we show the average and maximum error as a
    function of the privacy parameter $\eps$.}
  \label{fig:mwem}
\end{figure*}

We evaluate \dq on a large collection of $3$-way marginal queries on
several real datasets (\Cref{fig:data}) and high dimensional synthetic
data. Adult (census data) and KDD99 (network packet data) are from the
UCI repository~\citep{uci}, and have a mixture of discrete (but
non-binary) and continuous attributes, which we discretize into binary
attributes. We also use the (in)famous \citet{netflix} movie ratings
dataset, with more than 17{,}000 binary attributes. More precisely, we
can consider each attribute (corresponding to a movie) to be $1$ if a
user has watched that movie, and $0$ otherwise.

Rather than set the parameters as in \Cref{alg:dualquery}, we
experiment with a range of parameters. For instance, we frequently run
for fewer rounds (lower $T$) and take fewer samples (lower $s$). As
such, the accuracy guarantee (\Cref{thm:utility}) need not hold for
our parameters. However, we find that our algorithm gives good error,
often much better than predicted. In all cases, our parameters satisfy
the privacy guarantee \Cref{thm:privacy}.

We will measure the accuracy of the algorithm using two
measures:~\emph{average error} and~\emph{max error}. Given a
collection of queries $Q=\{q_1, \ldots, q_k\}$, input database $D$ and
the synthetic database $\hat D$ output by our query release algorithm,
the average error is defined as
\[
\frac{1}{|Q|} \sum_j |q_j(D) - q_j(\hat D)|
\]
and the max error is defined as
\[
\max_j |q_j(D) - q_j(\hat D)|.
\]
We will run the algorithm multiple times and take the average of
both error statistics. While our algorithm uses ``negated'' $3$-way marginal
queries internally, all error figures are for normal, ``positive'' $3$-way
marginal queries only.

%
\eg{netflix is not exactly binary, it contains dates and rating, we
  make a binary version out of it.}

\eg{[A different try to the above two paragraphs, the first part talks
  about the queries and data, the second about the parameters, that
  IMHO, are not so important:]
  We evaluate \dq on large collections of $3$-way marginal queries
  over several real datasets (\Cref{fig:data}). Adult (census
  data) and KDD99 (network packet data) come from the
  UCI repository~\citep{uci}, while \citet{netflix} is the (in)famous
  movie ratings dataset. For our experiments, we discretize them into
  binary attributes.
  We have evaluated the algorithm with a wide variety of
  privacy-preserving parameters beyond the settings in
  \Cref{alg:dualquery}. For instance, we frequently run for fewer
  rounds and take less samples (lower $T$ and $s$). In that setting
  the accuracy guarantee (\Cref{thm:utility}) does not necessarily
  hold, however, we obtain good error, often much better than
  predicted.
}
\sw{Good try. Although this is not so much different from what's
  already written except the ordering.}

\subsection*{Accuracy} \label{sec:accuracy}
We evaluate the accuracy of the algorithm on $500{,}000$ $3$-way marginals on
Adult, KDD99 and Netflix. We report maximum error in \Cref{fig:accuracy},
averaged over $5$ runs.  (Marginal queries have range $[0, 1]$, so error $1$ is
trivial.) Again, all error figures are for normal, ``positive'' $3$-way marginal
queries only.
The runs are $(\eps, 0.001)$-differentially private, with $\eps$ ranging from
$0.25$ to $5$.\footnote{%
  Since our privacy analysis follows from
  \Cref{composition}, our algorithm actually satisfies $(\eps,
  \delta)$-privacy for smaller values of $\delta$. For example, our
  algorithm is also $(\sqrt{2}\eps, \delta')$-private for $\delta' =
  10^{-6}$. Similarly, we could choose any arbitrarily small value of
  $\delta$, and \Cref{composition} would tell us that our algorithm
  was $(\eps', \delta)$-differentially private for an appropriate
  value $\eps'$, which depends only sub-logarithmically on $1/\delta$.
}

For the Adult and KDD99 datasets, we set step size $\eta = 2.0$, sample size $s =
1000$ while varying the number of steps $T$ according to the privacy budget
$\eps$, using the formula from \Cref{thm:privacy}.  For the Netflix dataset, we
adopt the same heuristic except we set $s$ to be $5000$.

The accuracy improves noticeably when $\eps$ increases from $0.25$ to
$1$ across 3 datasets, and the improvement diminishes gradually with larger
$\eps$. With larger sizes, both KDD99 and Netflix datasets allow \dq to
run with more steps and get significantly better error.

\subsection*{Scaling to More Queries}
Next, we evaluate accuracy and runtime when varying the number of queries. We
use a set of $40{,}000$ to $2$ million randomly generated marginals $\cQ$ on the
KDD99 dataset and run \dq with $(1, 0.001)$-privacy.
%
For all experiments, we use the same set of parameters: $\eta = 1.2, T = 170$
and $s = 1750$. By \Cref{thm:privacy}, each run of the experiment satisfies $(1,
0.001)$-differential privacy.  These parameters give stable performance as the
query class $\cQ$ grows.
As shown in \Cref{fig:queries}, both average and max error remain
mostly stable, demonstrating improved error compared to simpler
perturbation approaches.  For example, the Laplace mechanism's error
growth rate is $O(\sqrt{|\cQ|})$ under $(\eps, \delta)$-differential
privacy.
The runtime grows almost linearly in the number of queries, since we maintain a
distribution over all the queries.

\subsection*{Scaling to Higher Dimensional Data}
We also evaluate accuracy and runtime behavior for data dimension ranging
from $50$ to $512{,}000$.  We evaluate \dq under $(1, 0.001)$-privacy on
$100{,}000$ $3$-way marginals on synthetically generated datasets.  We report
runtime, max, and average error over $3$ runs in \Cref{fig:attributes}; note the
logarithmic scale for attributes axis. We do not include query evaluation in our
time measurements---this overhead is common to all approaches that answer a set
of queries.

When generating the synthetic data, one possibility is to set each attribute to
be $0$ or $1$ uniformly at random. However, this generates very uniform
synthetic data: a record satisfies any $3$-way marginal with probability $1/8$,
so most marginals will have value near $1/8$. To generate more challenging and
realistic data, we pick a separate bias $p_i \in [0, 1]$ uniformly at random for
each attribute $i$. For each data point, we then set attribute $i$ to be $1$
independently with probability equal to $p_i$.  As a result, different $3$-way
marginals have different answers on our synthetic data.

For parameters, we fix step size $\eta$ to be $0.4$, and increase the
sample size $s$ with the dimension of the data (from $200$ to
$50{,}000$) at the expense of running fewer steps. For these
parameters, our algorithm is $(1, 0.001)$-differentially private by
\Cref{thm:privacy}.  With this set of parameters, we are able to
obtain $8\%$ average error in an average of $10$ minutes of runtime,
excluding query evaluation.

\subsection*{Comparison with a Simple Baseline}

When interpreting our results, it is useful to compare them to a simple baseline
solution as a sanity check. Since our real data sets are sparse, we would
expect the answer of most queries to be small. A natural baseline is
the \emph{zeros data set}, where all records have all attributes set to $0$.

\jh{TODO.}\sw{put real-data discussion here}
For the experiments reporting max-error on real datasets, shown
in~\Cref{fig:accuracy}, the accuracy of~\dq out-performs the zeros
data set: the zeros data set has a max error of $1$ on both Adult
and KDD99 data sets, and a max error of $0.51$ on the Netflix
data set. For the experiments reporting average error in~\Cref{fig:queries}, the
zeros data set also has worse accuracy---it has an average error of
$0.11$ on the KDD dataset, whereas $\dq$ has an average error below $0.01$.

For synthetic data we can also compare to the zeros dataset, but it is a poor
benchmark since the dataset is typically not sparse. Hence, we also compare to a
\emph{uniform data set} containing one of each possible record.\footnote{%
  Note that this is the distribution that the MWEM algorithm starts with
  initially, so this corresponds to the error of MWEM after running it for 0
  rounds (running it for a non-zero number of rounds is not feasible given our
  data sizes)}
Recall that we generate our
synthetic data by first selecting a bias $p_i$ uniformly at random from $[0, 1]$
for each attribute $i$, then setting each record's attribute $i$ to be $1$ with
probability $p_i$ (independently across the attributes).

Given this data generation model, we can analytically compute the expected
average error on both benchmarks. Consider a query $q$ on
literals $a, b, c$.  The probability that $q$ evaluates to $1$ on a randomly
generated record is $p_a \cdot p_b \cdot p_c$. So, the expected value of $q$
when evaluated on the synthetic data is 
\[
  \mathbb{E} [ p_a \cdot p_b \cdot p_c ] = 0.125
\]
since $p_a, p_b, p_c$ are drawn independently and uniformly from $[0, 1]$. 

On the zeros data set, $q$ takes value $0$ unless it is of the form $\neg a
\land \neg b \land \neg c$, when it takes value $1$. If we average over all
queries $q$, this occurs on exactly $1/8$ of the queries. Thus, the expected
average error of the zeros database is
\[
  1/8 \cdot \mathbb{E} [ 1 - p_a \cdot p_b \cdot p_c ]
  +
  7/8 \cdot \mathbb{E} [ p_a \cdot p_b \cdot p_c ]
  =
  7/32 = 0.21875 .
\]
Similarly, the max-error will tend to $1$ as the size of the data set and number
of queries performed grows large. In practice, the max-error in our experiments
was above $0.98$.

We can perform a similar calculation with respect the uniform benchmark. Fix any
query $q$ on three literals $a,b,$ and $c$. Every query $q$ evaluates to exactly
$1/8$ on the uniform benchmark.  Thus, the expected error of the uniform
benchmark on $q$ is
\[
  \mathbb{E}[ | 1/8 - p_a \cdot p_b \cdot p_c | ] ,
\]
where $p_a, p_b, p_c$ are drawn independently and randomly from $[0, 1]$. This
error is
\[
  \mathbb{E}[ | 1/8 - p_a \cdot p_b \cdot p_c | ]
  =
  1/256 \cdot (7 + 18 \ln 2 + 2 (\ln 8)^3) \approx 0.11 ,
\]
by a direct calculation.

Similarly, the max-error will tend to $1 - 1/8 = 0.875$ as the size of the data
set and number of queries performed grows large. In practice, the max-error in
our experiments was above $0.85$. Thus, we see that \dq outperforms both of
these baseline comparisons on synthetic data.

\subsection*{Comparison with \MWEM}
Finally, we give a brief comparison between the algorithm \MWEM of
\citet{HLM12} and \dq in terms of the accuracy performance. We tested
both algorithms on the Adult dataset with a selection of $17$ attributes
to answer $10{,}000$ queries. The accuracy performance of \MWEM is better
than \dq on this low-dimensional dataset (shown
in~\Cref{fig:mwem}). Note that this matches with theoretical
guarantees---our theoretical accuracy bounds are worse than \MWEM as
shown by \citet{HR10,HLM12}. More generally, for low dimensional
datasets for which it is possible to maintain a distribution over
records, \MWEM likely remains the state of the art. Our work
complements \MWEM by allowing private data analysis on
higher-dimensional data sets. In this experiment, we fix the step size
of $\dq$ to be $\eta = 0.4$, and set the sample size and number of
steps to be $(s, T) = (35, 47), (40, 62), (45 , 71), (50, 78),(55,
83)$ to achieve privacy levels of $\eps = 1 , 2,3,4,5$
respectively. For the instantiation of \MWEM, we set the number of
steps to be $T = 15$ (see~\cite{HLM12} for more details of the
algorithm).

\subsection*{Methodology}
In this section, we discuss our experimental setup in more detail.

\paragraph*{Implementation details.}
The implementation is written in OCaml, using the \cplex constraint
solver. We ran the experiments on a mid-range desktop
machine with a 4-core Intel Xeon processor and 12 Gb of
RAM. Heuristically, we set a timeout for each \cplex call to $20$
seconds, accepting the best current solution if we hit the
timeout. For the experiments shown, the timeout was rarely reached.
\eg{As we start to understand the oracles and the algorithm better, my
  guess is that in optimal settings we will hit the timeout almost
  always, if there is enough privacy budget, that is.
}
\paragraph*{Data discretization.}
We discretize KDD99 and Adult datasets into binary attributes by
mapping each possible value of a discrete attribute to a new binary
feature. We bucket continuous attributes, mapping each bucket to a new
binary feature. We also ensure that our randomly generated $3$-way marginal
queries are sensible (i.e., they don't require an original attribute to take two
different values).

\paragraph*{Setting free attributes.}
Since the collection of sampled queries may not involve all of the attributes,
\cplex often finds solutions that leave some attributes unspecified.
We set these {\em free} attributes heuristically:
%
%
for real data, we set the attributes to $0$ as these datasets are fairly
sparse;%
%
%
\footnote{%
  The adult and KDD99 datasets are sparse due to the way we discretize
  the data; for the Netflix dataset, most users have only viewed a
  tiny fraction of the $17{,}000$ movies.}
for synthetic data, we set attributes to $0$ or $1$ uniformly at random.%
\footnote{%
  For a more principled way to set these free
  attributes, the sparsity of the dataset could be estimated at a
  small additional cost to privacy.}
\eg{We are assuming as common cases --- for instance that cplex
  doesn't set most of the attributes --- cases that are not common
  when working under optimal parameters. IMHO we should be careful
  with that. I liked the old Oracle discussion more, IMHO it provided
  some structure to this problem.}

\paragraph*{Parameter tuning.}
\dq has three parameters that can be set in a wide variety of configurations
without altering the privacy guarantee (\Cref{thm:privacy}): number of
iterations ($T$), number of samples ($s$), and learning rate ($\eta$), which
controls how aggressively to update the distribution. For a fixed level of
$\eps$ and $\delta$, there are many feasible private parameter settings.

Performance depends strongly on the choice of parameters: $T$ has an
obvious impact---runtime scales linearly with $T$. Other parameters have a more
subtle impact on performance---increasing $s$ increases the number of
constraints in
the integer program for \cplex, for example. We have investigated a range of
parameters, and for the experiments we have used informal
heuristics coming from our observations. We briefly describe some
of them here. For sparse datasets like the ones in \Cref{fig:data},
with larger $\eta$ (in the range of 1.5 to 2.0) and smaller $s$ (in
the same order as the number of attributes), we obtain good accuracy
when we set the free attributes to be $0$. But for denser data like
our synthetic data, we get better accuracy when we have smaller $\eta$
(around 0.4) and set the free attributes randomly. As for runtime, we
observe that \cplex could finish running quickly (in seconds) when the
sample size is in about the same range as the number of attributes
(factor of 2 or 3 different).



Ultimately, our parameters are chosen on a case-by-case basis for each data-set, which is not differentially private. Parameter setting should be done under differential privacy for a truly
realistic evaluation (which we do not do). Overall, we do not know of an approach that is both principled and practical to handle
this issue end-to-end; private parameter tuning is an area of active research (see e.g.,
\citet{CV13}).

\section{Discussion and conclusion}
We have given a new private query release mechanism that can handle datasets
with dimensionality multiple orders of magnitude larger than what was previously
possible.
Indeed, it seems we have not reached the limits of our approach---even
on synthetic data with more than  $500{,}000$ attributes, \dq continues to
generate useful answers with about $30$ minutes of overhead on top of query
evaluation (which by itself is on the scale of hours).  We believe that \dq
makes private analysis of high dimensional data practical for the first time.

There is still plenty of room for further research. For example, can our
approach be improved to match the optimal theoretical accuracy guarantees for
query release? We do not know of any fundamental obstacles, but we do not know
how to give a better analysis of our current methods. The projection-based
approach of \citet{NTZ13,DNT14} seems promising---it achieves theoretically
optimal bounds, and like in our approach, the ``hard'' projection step does not
need to be solved privately. It deserves to be studied empirically. On the other
hand this approach does not produce synthetic data.
\mg{Sorry for bringing back a thread of comments that seems
  closed. I'm not sure why do we need to talk about ``hours'' here. Or
  better this sounds like there is something wrong because we didn't
  mention about it in the experimental section but only here in the
  conclusion. Maybe we can put a sentence in the experimental section
  and then here just mention the overhead?}
\ar{The point here is that 30 minutes of overhead is very small, since it is a
  small fraction of the time you have to spend otherwise. If query evaluation
  took only seconds, then 30 minutes of overhead would be a long time.}
\sw{moved the last paragraph up in the comparison section}
\subsection*{Acknowledgments}
We thank Adam Smith, Cynthia Dwork and Ilya Mironov for productive discussions,
and for suggesting the Netflix dataset. This work was supported in part by NSF
grants CCF-1101389 and CNS-1065060.
Marco Gaboardi has been supported by the
European Community's Seventh Framework Programme FP7/2007-2013 under
grant agreement No. 272487.
This work was partially supported by a grant from the Simons Foundation (\#360368
to Justin Hsu).

\bibliographystyle{plainnat}
\bibliography{./refs}

\end{document}